\newcommand{\sketch}[1]{}
\newcommand{\tap}{{\rm TAP}}
\newcommand{\cov}{{\rm cov}}
\newcommand{\oddLP}{{\rm odd\text{-}\rm{LP}}}
\newtheorem{thm}{Theorem}[section]
\newtheorem{lemma}[thm]{Lemma}
\newtheorem{rem}[thm]{Remark}
\newcommand{\highlight}[1]{{\color{black} #1}}
\begin{document}

\begin{frontmatter}



\title{On Small-Depth Tree Augmentations}


\author[inst1]{Ojas Parekh}

\affiliation[inst1]{organization={Center for Computing Research},
            addressline={Sandia National Laboratories}, 
            city={Albuquerque},
            state={NM},
            country={USA}}

\author[inst2]{R. Ravi}
\author[inst2]{Michael Zlatin}

\affiliation[inst2]{organization={Tepper School of Business},
            addressline={Carnegie Mellon University}, 
            city={Pittsburgh},
            state={PA},
            country={USA}}

\begin{abstract}
We study the Weighted Tree Augmentation Problem for general link costs.
We show that the integrality gap of the ODD-LP relaxation for the (weighted) Tree Augmentation Problem for a $k$-level tree instance is at most $2 - \frac{1}{2^{k-1}}$.  For 2- and 3-level trees, these ratios are $\frac32$ and $\frac74$ respectively.  Our proofs are constructive and yield polynomial-time approximation algorithms with matching guarantees.
\end{abstract}

\begin{keyword}
Approximation Algorithm \sep Network Design \sep Integrality Gap
\PACS 0000 \sep 1111
\MSC 0000 \sep 1111
\end{keyword}

\end{frontmatter}
\title{{\bf On Small-Depth Tree Augmentations} }
\author{Ojas Parekh}



\date{\today}

\section{Introduction}
We consider the {\em weighted tree augmentation problem (TAP)}: Given an undirected graph $G=(V,E)$ with non-negative weights $c$ on the edges, and a spanning tree $T$, find a minimum cost subset of edges $A \subseteq E(G) \setminus E(T)$ such that $(V, E(T) \cup A)$ is two-edge-connected. We will call the elements of $E(T)$ as (tree) edges and those of $E(G) \setminus E(T)$ as {\em links} for convenience. A graph is \emph{two-edge-connected} if the removal of any edge does not disconnect the graph, i.e., it does not have any cut edges.
Since cut edges are also sometimes called bridges, this problem has also been called {\em bridge connectivity augmentation} in prior work~\cite{frederickson1981approximation}.

While TAP is well studied in both the weighted and unweighted case~\cite{frederickson1981approximation, khuller1993approximation, ravithesis, cohen2013, CheriyanG15a, 32KortsarzNutov, adjiashvili2016improved, fiorini2018approximating,grandoni2018improved}, it is NP-hard even when the tree has diameter $4$~\cite{frederickson1981approximation} or when the set of available links form a single cycle on the leaves of the tree $T$~\cite{cheriyan1999}, and is also APX-hard~\cite{kortsarz2004hardness}. Weighted TAP remains one of the simplest network design problems without a better than $2$-approximation in the case of general (unbounded) link costs and arbitrary depth trees, until very recently~\cite{traub2021betterthan2,traub2021local}. For the case of $n$-node trees with height $k$, Cohen and Nutov~\cite{cohen2013} gave a $(1 + \ln 2) \simeq 1.69$-approximation algorithm that runs in time $n^{3^k} \cdot poly(n)$ using an idea of Zelikovsy for approximating Steiner trees. 
Very recently, this approach has been extended to provide an approximation to the general case of the problem with the same performance guarantee by Traub and Zenklusen~\cite{traub2021betterthan2}. 
A follow-up paper by the same authors~\cite{traub2021local} improved the approximation ratio to nearly 1.5.
However, these papers do not provide any new results on the integrality gap of some natural LP relaxations for the problem that we discuss next.

\subsection{EDGE-LP Relaxation}
TAP can also be viewed as a set covering problem.
The edges of the tree $T$ define a laminar collection of cuts that are the elements to be covered using sets
represented by the links.
A link $\ell$ is said to \emph{cover} an edge $e$ if the unique cycle of $\ell+T$ contains $e$. Here we use $\cov(e)$ for a tree edge $e$ to denote the set of links which cover $e$.
The natural covering linear programming relaxation for the problem, EDGE-LP, is a special instance of a set covering problem with one requirement (element) corresponding to each cut edge in the tree. Since the tree edges define subtrees under them (after rooting it at an arbitrary node) that form a laminar family, this is also equivalent to a laminar cover problem~\cite{cheriyan1999}.
\begin{align}
\min \sum_{\ell \in E} &c_\ell x_\ell& \nonumber \\
\label{eq:edge} x(\cov(e))&\geq 1 \quad &\forall e\in E(T) \\
\label{eq:nonneg} x_\ell&\geq 0 \quad &\forall \ell\in E
\end{align}

Fredrickson and J\'aj\'a showed that the integrality gap for EDGE-LP can not exceed $2$~\cite{frederickson1981approximation} and also studied the related problem of augmenting the tree to be two-node-connected (biconnectivity versus bridge-connectivity augmentation)~\cite{fredrickson1982relationship}. Cheriyan, Jord\'an, and Ravi, who studied half-integral solutions to EDGE-LP and proved an integrality gap of $\frac43$ for such solutions, also conjectured that the overall integrality gap of EDGE-LP was at most $\frac43$~\cite{cheriyan1999}. However, Cheriyan et al.~\cite{cheriyan2008integrality} later demonstrated an instance for which the integrality gap of EDGE-LP is at least $\frac32$.

\subsection{ODD-LP Relaxation}
Fiorini et al. studied
the relaxation consisting of adding all $\{0,\frac12\}$-Chv{\'a}tal-Gomory cuts of the EDGE-LP~\cite{fiorini2018approximating}. We call their extended linear program the ODD-LP.

We define $\delta(S)$ for $S \subset V$ as the set of all links and edges with exactly one endpoint in $S$, and recall that $\cov(e)$ for a tree edge $e$ is the set of links that cover $e$.  We use $E(T)$ to refer to the set of tree edges, and $L$ is the set of links, $E(G) \setminus E(T)$.
\begin{align}
\min \sum_{\ell \in E} &c_\ell x_\ell& \nonumber \\
\label{eq:odd} x(\delta(S) \cap L)  + \sum_{e\in \delta(S)\cap E(T)} x(\cov(e)) &\geq |\delta(S)\cap E(T)|+1 \quad &\forall S \subseteq V, |\delta(S) \cap E(T)| \text{ is odd} \\
 x_\ell&\geq 0 \quad &\forall \ell\in E \nonumber
\end{align}

We describe here the validity of the constraints in ODD-LP using a proof due to Robert Carr.
Consider a set of vertices $S$ such that $|\delta(S)\cap E(T)|$ is odd.
By adding together the edge constraints for $\delta(S) \cap E(T)$ we get:
$$\sum_{e\in \delta(S)\cap E(T)} x(\cov(e)) \geq |\delta(S)\cap E(T)|$$

Now we can add any non-negative terms to the left hand side and still remain feasible. Therefore
$$x(\delta(S) \cap L)+\sum_{e\in \delta(S)\cap E(T)} x(\cov(e)) \geq |\delta(S)\cap E(T)| $$
is also feasible. Now consider any link $\ell$. If $x_\ell$ appears an even number of times in $\sum_{e\in \delta(S)\cap E(T)} x(\cov(e))$ then $\ell$ is not in $\delta(S)$. Similarly, if $x_\ell$ appears an odd number of times in $\sum_{e\in \delta(S)\cap E(T)} x(\cov(e))$ then $\ell$ is in $\delta(S)$. So, the coefficient of every $x_\ell$ on the left hand side of this expression is even. In particular, for any integer solution the left hand side is even and the right hand side is odd. Therefore, we can strengthen the right hand side by increasing it by one, and the resulting constraint will still be feasible for any integer solution. The constraint,
$$x(\delta(S) \cap L)+\sum_{e\in \delta(S)\cap E(T)} x(\cov(e)) \geq |\delta(S)\cap E(T)|+1$$
is thus valid for any integer solution to TAP as desired.

\section{Preliminaries}
We will use the following theorem about the ODD-LP~\cite{fiorini2018approximating}.
For a choice of a root $r$, we call links which connect two different components of $T - r$ as cross-links, and those that go from a node of $T$ to its ancestor as up-links.

\begin{thm}
\label{oddstar}
The ODD-LP is integral for weighted TAP instances that contain only cross- and up-links.
\end{thm}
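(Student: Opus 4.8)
The plan is to prove that every extreme point of the ODD-LP polyhedron is integral; since ODD-LP is a covering polyhedron with non-negative costs, an optimum is attained at such an extreme point, and integrality of all extreme points yields the theorem. First I would root $T$ at $r$ and record the link classification: an up-link has both endpoints inside a single component $T_i$ of $T-r$ (or one endpoint at $r$), while a cross-link joins two distinct components $T_i,T_j$. The guiding intuition comes from the $2$-level case, where every link is either a leaf-to-root up-link (covering a single root edge) or a leaf-to-leaf cross-link (covering exactly two root edges). That case is precisely minimum-weight edge cover on the graph whose vertices are the leaves, and here the ODD-LP constraints specialize to the vertex-covering inequalities together with the odd-set (blossom-type) inequalities that are known to describe the integral edge-cover polytope. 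The goal is to show that the general cross/up-link instance inherits this integrality.

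The main technical step is an uncrossing argument. Fix an extreme point $x$ and let $\mathcal{F}$ be the family of constraints tight at $x$. The edge constraints $x(\cov(e))\ge 1$ correspond to the subtrees hanging below each tree edge, so they already form a laminar family. The crux is to uncross the tight odd-cut constraints. For two crossing tight sets $S,S'$ with both $|\delta(S)\cap E(T)|$ and $|\delta(S')\cap E(T)|$ odd, I would combine them using submodularity of the cut function together with the behaviour of $\cov$ under the root-to-node path structure of $T$, carefully tracking how the parity of $|\delta(\cdot)\cap E(T)|$ distributes over $S\cap S'$, $S\cup S'$, $S\setminus S'$ and $S'\setminus S$. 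The aim is to replace each crossing pair by non-crossing tight constraints spanning the same space, so that a maximal linearly independent set of tight constraints can be taken laminar, with at most one parity constraint attached to each subtree and to the root cut.

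Given such a laminar, uncrossed system of tight constraints, I would argue integrality structurally. Restricted to a single subtree $T_i$, the up-links cover nested intervals of tree edges along root-to-node paths, so the corresponding incidence matrix is a network (interval) matrix and is totally unimodular. The cross-links behave like edges of an auxiliary graph $H$ on the top nodes of the subtrees, and the odd-cut constraints at and near the root play the role of the blossom/odd-set inequalities of the edge-cover polytope on $H$. The plan is to show that the full tight system is the network-matrix structure coming from the up-links augmented by these parity rows, a combination known to remain integral: it reduces in the base case exactly to edge cover, and more generally to a parity-constrained (T-join-like) network-matrix system. Hence the unique solution $x$ of the tight system is integral.

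The hard part will be the odd-cut uncrossing in the presence of the mixed left-hand side $x(\delta(S)\cap L)+\sum_{e\in\delta(S)\cap E(T)} x(\cov(e))$. Unlike a pure cut function, this expression mixes a membership ($\delta$) term with a covering ($\cov$) term, and one must verify that submodular combination of two tight odd constraints again yields valid ODD-LP constraints with the correct right-hand side after the $+1$ parity strengthening. Equally delicate is checking that, after uncrossing, the parity rows interact with the network-matrix rows only in the controlled, block-separated way needed for integrality. Confirming this clean separation between the up-link (network) part and the cross-link (parity) part is the principal obstacle, and the $2$-level edge-cover specialization serves as the sanity check that the separation has been set up correctly.
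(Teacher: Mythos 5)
Your proposal stalls exactly where the theorem's real content lies, and it is worth being precise about the location. Note first that the paper does not prove Theorem~\ref{oddstar} from scratch: it imports it from Fiorini et al., whose argument exhibits the constraint matrix of a star-shaped instance as a \emph{binet} matrix (a bidirected-graph generalization of network matrices) and then invokes an Edmonds--Johnson-type result that, for such matrices, the $\{0,\frac12\}$-Chv\'atal closure of the covering LP --- which is precisely the ODD-LP --- equals the integer hull. Your plan replaces this with a two-step extreme-point argument: (i) uncross the tight odd constraints into a laminar family, and (ii) argue that the resulting system of ``network rows plus parity rows'' is integral. Step (ii) is the theorem; you dispose of it with the phrase ``a combination known to remain integral,'' but no such general theorem exists. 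Appending parity/blossom rows to a totally unimodular or network system does not in general preserve integrality; the combinations that do work are exactly those arising from bidirected-graph incidence structures, and verifying that star-shaped TAP is such a structure (each cross-link column decomposes as the signed sum of two up-link columns toward the root --- this is the binet representation) is the step your outline omits. Completing (ii) would amount to reproving the Edmonds--Johnson/binet machinery in this setting, so as written the proposal is a plan with its central lemma assumed; you concede as much when you call the ``clean separation'' the principal obstacle.

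By contrast, the step you flag as the hard part --- uncrossing with the mixed left-hand side --- is the part that can be settled cleanly. Since the odd constraint applied to either side of a single tree edge $e$ reads $2x(\cov(e))\geq 2$, the slacks $y_e:=x(\cov(e))-1$ are non-negative for every feasible $x$; rewriting each odd constraint as $x(\delta(S)\cap L)+y(\delta(S)\cap E(T))\geq 1$ turns the left-hand side into the cut function of the graph $T+L$ with non-negative weights $(x,y)$, and the condition ``$|\delta(S)\cap E(T)|$ odd'' is equivalent to $S$ containing an odd number of vertices of odd tree-degree. The tight constraints are then tight $T$-cuts of a symmetric submodular (hence also posimodular) function, and standard parity-preserving uncrossing applies verbatim. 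Two further inaccuracies: your definition of up-link (``both endpoints inside a single component of $T-r$'') is wrong and dangerously so, since it admits in-links, for which the theorem is false --- the instance of Figure~\ref{fig:blowupinlink} has non-integral $\oddLP$ precisely because of in-links; the correct definition requires the endpoints to be in ancestor-descendant relation. And your assertion that the uncrossed family carries ``at most one parity constraint attached to each subtree'' has no justification; laminar families of tight odd cuts can nest arbitrarily deeply, which is another reason the structural analysis in step (ii) cannot be waved through.
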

The integrality of the formulation is shown by demonstrating that the constraint matrix is an example of a binet matrix~\cite{appa2006bidirected,appa2007optimization}, a generalization of network matrices that are a well-known class of totally unimodular matrices.
Moreover, while general Chv{\'a}tal-Gomory closures are NP-hard to optimize over, these restricted versions over half-integral combinations can be optimized in polynomial time~\cite{caprara1996}.
Such instances with only cross- and up-links are informally called ``star-shaped" with the center of the star being the chosen root, so we will refer to the above result as saying that the ODD-LP for star-shaped instances centered at a root have integrality gap 1 and solutions to such instances can be obtained in polynomial time.

Without loss of generality, we may consider TAP instances where all links go between two leaves \cite{iglesias2017coloring}. We reproduce the proof here for completeness.  

\begin{lemma}
\label{l2l}
Given an instance $(T,L,c)$ of weighted TAP, there is a corresponding, polynomial-sized instance $(T',L',c')$ with all links having both endpoints as leaves, such that there is a cost-preserving bijection between the solutions to the two instances. 
\end{lemma}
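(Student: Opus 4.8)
The plan is to build $T'$ from $T$ by giving every internal vertex that is an endpoint of some link its own private leaf, and then sliding the offending link endpoints onto these leaves. Concretely, root $T$ arbitrarily; for each internal vertex $v$ incident to at least one link, introduce a fresh leaf $v'$ together with a new tree edge $(v,v')$, and for every link $\ell$ with an endpoint at $v$ replace that endpoint by $v'$. A link $\ell=(u,v)\in L$ thus maps to a link $\phi(\ell)\in L'$ whose endpoints are the (possibly relocated) images of $u$ and $v$; setting $c'(\phi(\ell))=c_\ell$ makes $\phi$ a cost-preserving bijection between $L$ and $L'$, so it remains only to check that $\phi$ carries feasible solutions to feasible solutions in both directions. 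Since we add at most one leaf per vertex, $|V(T')|\le |V(T)|+|E(T)|$ and the instance stays polynomial-sized.

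The first and routine step is to verify that coverage of the \emph{original} tree edges is untouched. For $e\in E(T)$, the unique cycle closed by $\phi(\ell)$ in $T'$ traverses exactly the edges of the cycle closed by $\ell$ in $T$, together with the newly added pendant edges at the relocated endpoints of $\ell$; in particular it contains $e$ if and only if the $T$-cycle of $\ell$ does. Hence $\cov(e)$ computed in $T'$ equals $\phi(\cov(e))$ computed in $T$ for every original edge $e$, so a set $A\subseteq L$ covers all of $E(T)$ exactly when $\phi(A)$ covers all original edges inside $T'$.

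The crux --- and the step I expect to be the main obstacle --- is the new pendant edges $(v,v')$, since each one carries a covering constraint with no a priori counterpart in the original instance. By the observation above, $(v,v')$ is covered precisely by the images of the links incident to $v$, so the correspondence is a genuine bijection only if covering each new edge in $T'$ is both forced by and forces the original constraints; a careless attachment can strictly over-constrain the new instance and change its optimum. The way I would close this gap is to arrange the construction so that the cover set of each newly created edge coincides with that of an adjacent original edge --- for instance, splitting off $v'$ along the edge joining $v$ to its parent and being careful about links whose cycle descends from $v$ rather than ascending --- so that every new constraint is a literal duplicate of an existing one and adds nothing. Granting this, every feasible $A$ for $(T,L,c)$ yields a feasible $\phi(A)$ for $(T',L',c')$ and conversely, $\phi$ is the desired cost-preserving bijection, and I would finish by checking the degenerate cases (internal vertices incident to no link, and links already joining two leaves), where no relocation is performed and the identity correspondence applies.
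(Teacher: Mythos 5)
Your first two steps (relocating link endpoints onto fresh pendant leaves and checking that the fundamental cycle of $\phi(\ell)$ in $T'$ contains exactly the same original tree edges as that of $\ell$ in $T$) match the paper's construction, and you correctly identify the crux: the new pendant edges carry covering constraints with no counterpart in $(T,L,c)$. But the fix you sketch --- arranging matters so that $\cov((v,v'))$ ``coincides with that of an adjacent original edge'' --- cannot be realized, and your proof is explicitly conditional on it (``Granting this\ldots''). The cover set of a pendant edge $(v,v')$ is, by construction, exactly the set of links incident to $v'$, i.e.\ links formerly incident to $v$; whereas the cover set of the adjacent original edge $(\mathrm{parent}(v),v)$ also contains links that merely pass through $v$ (one endpoint strictly below $v$, the other outside its subtree). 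No choice of attachment point or of which endpoints to slide can make these two sets equal without destroying the coverage of original edges. Concretely: take the tree with root $r$, internal child $v$, and leaves $w_1,w_2$ below $v$, with links $\ell_1=(w_1,r)$, $\ell_2=(w_2,r)$, $\ell_3=(w_1,v)$. Then $\{\ell_1,\ell_2\}$ is feasible in the original instance, but in your instance the edge $(v,v')$ is covered only by the relocated $\ell_3'=(w_1,v')$, so every feasible solution must pay for $\ell_3$; the map $\phi$ is therefore not a bijection between feasible solutions, and the optimum can strictly increase.

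The idea you are missing is that the new constraints should be made \emph{trivially} satisfiable rather than duplicates of old ones. The paper attaches \emph{two} new leaves $u'$ and $u''$ to every internal node $u$, relocates links to $u'$, and adds a \emph{zero-cost link} $(u',u'')$, which covers exactly the two pendant edges $(u,u')$ and $(u,u'')$ and nothing else. Every solution of the new instance must contain these zero-cost links (each edge $(u,u'')$ has no other covering link), and conversely any original solution becomes feasible after adding them at zero cost; the cost-preserving bijection is $A \mapsto \phi(A)\cup Z$, where $Z$ is the set of zero-cost links. Note this also explains why two leaves are needed and not one: a zero-cost link must itself be leaf-to-leaf, and if it ran from $v'$ to any pre-existing leaf it would cover other tree edges for free, breaking cost preservation in the reverse direction.
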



\begin{proof}
The proof proceeds by a simple graph reduction. Suppose we are given an instance defined by a graph $G$ with associated tree $T$ for the weighted TAP. We create a new instance of the leaf-to-leaf version as follows: For every internal node $u$ in the original tree $T$, we add two new leaf nodes $u'$ and $u''$ both adjacent to $u$ to get a new tree $T'$. For every link $f = (v,u)$ in the original instance, we reconnect the link to now end in the leaf $u'$ rather than the internal node $u$ in the tree $T'$. Thus, if both $v$ and $u$ are internal nodes, the new link is $(v',u')$; if only $u$ is internal, the new link is $(v,u')$ and if both are leaves, the new link is the same $(u,v)$ as in $G$. Note that the new graph $G'$ is a leaf-to-leaf instance. In addition, for every internal node $u$ in the original tree $T$, we add a new link of zero cost between $u'$ and $u''$ - this will serve to cover the newly added edges $(u,u')$ and $(u,u'')$ without changing the coverage of any of the edges in the original tree $T$. See Figure~\ref{fig:l2l}.

\begin{figure}[t]
		\centering
		\begin{tikzpicture}[scale=0.7]

		
		\begin{scope}
		
		\draw [-] [red, line width=0.4mm,xshift=0 cm] plot [smooth, tension=1] coordinates {(3,6) (0,4)};
		\draw [-] [red, line width=0.4mm,xshift=0 cm] plot [smooth, tension=1] coordinates {(3,6) (3,4)};
		\draw [-] [red, line width=0.4mm,xshift=0 cm] plot [smooth, tension=1] coordinates {(3,6) (6,4)};
		\draw [-] [red, line width=0.4mm,xshift=0 cm] plot [smooth, tension=1] coordinates {(0,4) (-1,2)};
		\draw [-] [red, line width=0.4mm,xshift=0 cm] plot [smooth, tension=1] coordinates {(0,4) (1,2)};
		\draw [-] [red, line width=0.4mm,xshift=0 cm] plot [smooth, tension=1] coordinates {(3,4) (2,2)};
		\draw [-] [red, line width=0.4mm,xshift=0 cm] plot [smooth, tension=1] coordinates {(3,4) (4,2)};
		\draw [-] [red, line width=0.4mm,xshift=0 cm] plot [smooth, tension=1] coordinates {(4,2) (5,0)};
		\draw [-] [red, line width=0.4mm,xshift=0 cm] plot [smooth, tension=1] coordinates {(4,2) (3,0)};
		
		\draw [dashed] [black, line width=0.4mm,xshift=0 cm] plot [smooth, tension=1] coordinates {(6,4) (5,0)};
		\draw [dashed] [black, line width=0.4mm,xshift=0 cm] plot [smooth, tension=1] coordinates {(6,4) (3,3.3) (0,4)};
		\draw [dashed] [black, line width=0.4mm,xshift=0 cm] plot [smooth, tension=1] coordinates {(0,4) (1.85,2.7) (4,2)};
		
		\draw[black,fill=white] (3,6) ellipse (0.15 cm  and 0.15 cm);	
		\draw[black,fill=white] (0,4) ellipse (0.15 cm  and 0.15 cm);			
		\draw[black,fill=white] (3,4) ellipse (0.15 cm  and 0.15 cm);
		\draw[black,fill=white] (6,4) ellipse (0.15 cm  and 0.15 cm);	
		\draw[black,fill=white] (-1,2) ellipse (0.15 cm  and 0.15 cm);	
		\draw[black,fill=white] (1,2) ellipse (0.15 cm  and 0.15 cm);	
		\draw[black,fill=white] (2,2) ellipse (0.15 cm  and 0.15 cm);	
		\draw[black,fill=white] (4,2) ellipse (0.15 cm  and 0.15 cm);
		\draw[black,fill=white] (3,0) ellipse (0.15 cm  and 0.15 cm);	
		\draw[black,fill=white] (5,0) ellipse (0.15 cm  and 0.15 cm);

		\node (r) at (2.6,6) {{$r$}};
		\node (u) at (-0.6,4) {{$u$}};
		\node (v) at (4.4,2) {{$v$}};
		\end{scope}
		
		\begin{scope}[xshift = 12 cm]
		
		\draw [-] [red, line width=0.4mm,xshift=0 cm] plot [smooth, tension=1] coordinates {(3,6) (0,4)};
		\draw [-] [red, line width=0.4mm,xshift=0 cm] plot [smooth, tension=1] coordinates {(3,6) (3,4)};
		\draw [-] [red, line width=0.4mm,xshift=0 cm] plot [smooth, tension=1] coordinates {(3,6) (6,4)};
		\draw [-] [red, line width=0.4mm,xshift=0 cm] plot [smooth, tension=1] coordinates {(0,4) (-2,2)};
		\draw [-] [red, line width=0.4mm,xshift=0 cm] plot [smooth, tension=1] coordinates {(0,4) (-1,2)};
		\draw [-] [red, line width=0.4mm,xshift=0 cm] plot [smooth, tension=1] coordinates {(0,4) (0,2)};
		\draw [-] [red, line width=0.4mm,xshift=0 cm] plot [smooth, tension=1] coordinates {(0,4) (1,2)};
		\draw [-] [red, line width=0.4mm,xshift=0 cm] plot [smooth, tension=1] coordinates {(3,4) (2,2)};
		\draw [-] [red, line width=0.4mm,xshift=0 cm] plot [smooth, tension=1] coordinates {(3,4) (4,2)};
		\draw [-] [red, line width=0.4mm,xshift=0 cm] plot [smooth, tension=1] coordinates {(4,2) (5,0)};
		\draw [-] [red, line width=0.4mm,xshift=0 cm] plot [smooth, tension=1] coordinates {(4,2) (4,0)};
		\draw [-] [red, line width=0.4mm,xshift=0 cm] plot [smooth, tension=1] coordinates {(4,2) (2,0)};
		\draw [-] [red, line width=0.4mm,xshift=0 cm] plot [smooth, tension=1] coordinates {(4,2) (3,0)};
		
		\draw [dashed] [black, line width=0.4mm,xshift=0 cm] plot [smooth, tension=1] coordinates {(6,4) (5,0)};
		\draw [dashed] [black, line width=0.4mm,xshift=0 cm] plot [smooth, tension=1] coordinates {(6,4) (2.5,3) (-1,2)};
		\draw [dashed] [black, line width=0.4mm,xshift=0 cm] plot [smooth, tension=1] coordinates {(-2,2) (-1.5,1.7) (-1,2)};
		\draw [dashed] [black, line width=0.4mm,xshift=0 cm] plot [smooth, tension=1] coordinates {(-1,2) (0.8,-0.6) (3,0)};
		\draw [dashed] [black, line width=0.4mm,xshift=0 cm] plot [smooth, tension=1] coordinates {(2,0) (2.5,-0.2) (3,0)};

		\draw[black,fill=white] (3,6) ellipse (0.15 cm  and 0.15 cm);	
		\draw[black,fill=white] (0,4) ellipse (0.15 cm  and 0.15 cm);			
		\draw[black,fill=white] (3,4) ellipse (0.15 cm  and 0.15 cm);
		\draw[black,fill=white] (6,4) ellipse (0.15 cm  and 0.15 cm);	
		\draw[black,fill=white] (0,2) ellipse (0.15 cm  and 0.15 cm);	
		\draw[black,fill=white] (1,2) ellipse (0.15 cm  and 0.15 cm);	
		\draw[black,fill=white] (2,2) ellipse (0.15 cm  and 0.15 cm);	
		\draw[black,fill=white] (4,2) ellipse (0.15 cm  and 0.15 cm);
		\draw[black,fill=white] (4,0) ellipse (0.15 cm  and 0.15 cm);	
		\draw[black,fill=white] (5,0) ellipse (0.15 cm  and 0.15 cm);	
		
		\draw [black,line width = 1,fill=white] (-2-0.15,2-0.15) rectangle (-2+0.15,2+0.15);
		\draw [black,line width = 1,fill=white] (-1-0.15,2-0.15) rectangle (-1+0.15,2+0.15);
		
		\draw [black,line width = 1,fill=white] (2-0.15,0-0.15) rectangle (2+0.15,0+0.15);
		\draw [black,line width = 1,fill=white] (3-0.15,0-0.15) rectangle (3+0.15,0+0.15);
		
		\node (r) at (2.6,6) {{$r$}};
		\node (u) at (-0.6,4) {{$u$}};
		\node (v) at (4.4,2) {{$v$}};
		\node (uprime) at (-1.3,1.4) {{$u'$}};
		\node (udoubleprime) at (-2.3,1.4) {{$u''$}};
		\node (vprime) at (3,-0.6) {{$v'$}};
		\node (vdoubleprime) at (1.4,0) {{$v''$}};
		\end{scope}

		\draw [->] [black, line width=1mm,xshift=0 cm] plot coordinates {(7,3)(9,3)};
		\end{tikzpicture}
		
		\caption{Transformation to a leaf to leaf instances}
		\label{fig:l2l}
\end{figure}
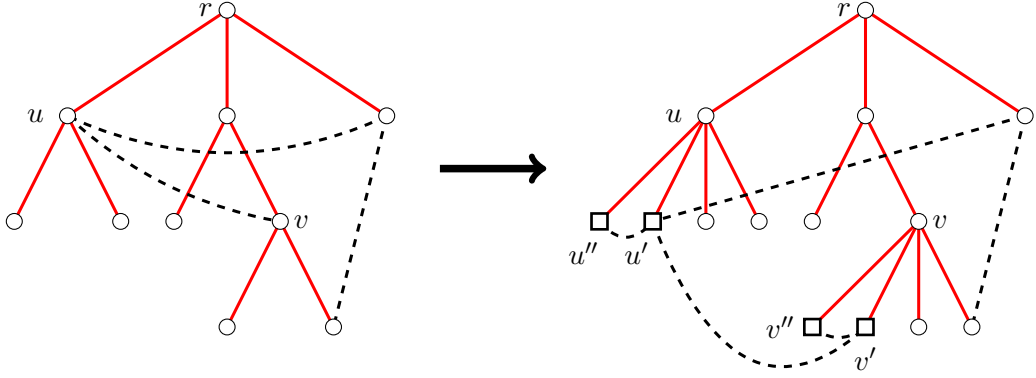


Given an solution $A \subseteq E(G) \setminus E(T)$ of minimum cost in the original instance on $G$, if we add the new zero cost edges for every internal node to $A$ we get a solution of the same cost in the new instance. Conversely, the edges in any solution $A'$ to the problem in $G'$, when restricted to the original instance is a solution of the same cost in $G$: this is because no edge of $A' \cap \cup_{\textrm{ internal nodes } u} \{ (u',u'') \}$ is useful in covering the edges of $E(T)$ in the original instance.
\end{proof}

\begin{rem}
The cost-preserving bijection described above can be extended to map fractional solutions of $\oddLP(T,L)$ to $\oddLP(T',L')$. In other words, every weighted TAP problem can be reduced to an instance where all links go between a pair of leaves without loss of generality for investigating approximation ratios and integrality gaps of the $\oddLP$.
\end{rem}

Note that given a rooted tree of $k$ levels (i.e., the maximum distance of any leaf from the root is $k$), the above transformation results in a leaf-to-leaf instance also with $k$ levels.

\section{Improved Integrality Gaps for Trees of depth 2 and 3}

\begin{thm}\label{thm:twolevel}
The integrality gap of the ODD-LP for a two-level tree instance is at most $\frac32$.
\end{thm}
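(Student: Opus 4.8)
The plan is to first invoke Lemma~\ref{l2l} and the following Remark so that, without loss of generality, the instance is leaf-to-leaf: the tree has a root $r$, level-$1$ internal nodes $v_1,\dots,v_m$ with subtrees $T_1,\dots,T_m$, and every link joins two leaves. Root everything at $r$ and partition the links into the set $U$ of cross- and up-links (joining leaves of different $T_i$, or a leaf to an ancestor) and the set $I$ of \emph{in-subtree} links (both endpoints in the same $T_i$). The structural fact I would lean on is that an in-subtree link of $T_i$ covers only the two level-$2$ (``bottom'') edges at its endpoints, and never a level-$1$ (``top'') edge $(r,v_i)$; hence every top edge is covered exclusively by links of $U$. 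Fix an optimal fractional solution $x^*$ of $\oddLP$ with value $\sum_{\ell\in E} c_\ell x^*_\ell$; the goal is to round it to an integral augmentation of cost at most $\tfrac32\sum_{\ell\in E} c_\ell x^*_\ell$.

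The two halves of the rounding both come from Theorem~\ref{oddstar}. For the top edges, restrict attention to $U$: since $x^*$ restricted to $U$ still satisfies $x^*(\cov(r,v_i))\ge 1$ for each $i$, and the instance consisting of $U$ together with the top edges is star-shaped at $r$, Theorem~\ref{oddstar} yields an integral cover $A\subseteq U$ of all top edges with $\sum_{\ell\in A}c_\ell \le \sum_{\ell\in U} c_\ell x^*_\ell$. Covering the top edges is precisely an edge-cover problem on the ``super-nodes'' $v_1,\dots,v_m$ (a link of $U$ from $T_i$ to $T_j$ acts as an edge $v_iv_j$), so $A$ meets every nonempty subtree and already covers at least one bottom edge in each $T_i$. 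Symmetrically, re-rooting at each $v_i$ makes every link incident to a leaf of $T_i$ a cross- or up-link of the star centered at $v_i$, so covering the bottom edges of $T_i$ is again a star-shaped (edge-cover) instance and Theorem~\ref{oddstar} gives an integral bottom-cover $B$ of cost $\sum_{\ell\in B}c_\ell\le \sum_{\ell\in E} c_\ell x^*_\ell$.

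Taking $A\cup B$ is already feasible and integral of cost at most $2\sum_{\ell\in E} c_\ell x^*_\ell$, so the real content of the theorem is improving the factor $2$ to $\tfrac32$. The saving should come from the double duty performed by cross-links: a link of $U$ chosen to cover a top edge simultaneously covers a bottom edge in each of the two subtrees it meets, so $A$ discharges one bottom edge per subtree ``for free,'' and after committing to $A$ I would only pay to cover the bottom edges left uncovered by $A$. The plan is to charge this \emph{residual} bottom-cover to $\tfrac12\sum_{\ell\in E} c_\ell x^*_\ell$ rather than to the full LP value. This is exactly where the odd-cut constraints are essential: the bottom edges of each $T_i$ define an edge-cover (blossom) polytope whose $\oddLP$ is integral, and the constraint for $S=\{v_i\}$, valid precisely when $v_i$ has an even number of leaf children, forces $\sum_{e\in\delta(v_i)\cap E(T)} x^*(\cov(e))\ge|\delta(v_i)\cap E(T)|+1$, providing the slack needed to pay for the residual at half rate.

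Making this charging precise — equivalently, exhibiting the residual bottom-cover as a convex combination of integral covers whose average cost is at most half the LP value once $A$ is fixed — is the step I expect to be the main obstacle, and it is exactly the base case of the recursion $\rho_k=1+\tfrac12\rho_{k-1}$ with $\rho_1=1$, which gives $\rho_2=\tfrac32$ (and, continued, the general bound $2-2^{-(k-1)}$). Finally I would translate the fractional guarantee back through the cost-preserving bijection of the Remark to obtain the stated integrality-gap bound, noting that the construction is polynomial because each star subproblem is solved exactly by its integral $\oddLP$.
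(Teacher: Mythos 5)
Your proposal is correct only up to the factor-$2$ warm-up; it stops exactly where the content of the theorem begins. You concede that charging the residual bottom-cover to $\tfrac12\sum_{\ell}c_\ell x^*_\ell$ is ``the main obstacle'' and give no argument for it, and in fact that charging claim is false as stated, not merely unproven. Consider a single subtree with internal node $v_1$ and leaves $a,b,c,d$, a root $r$ with one leaf child $z$, and links $(a,z),(b,z),(c,z),(d,z)$, each of cost $1$ (no in-links). Each bottom edge $(v_1,\cdot)$ has a unique covering link, so the $\oddLP$ value is $4$. Your first phase solves the star instance on the top edges alone and may legitimately return $A=\{(a,z)\}$ of cost $1$; the residual bottom edges $(v_1,b),(v_1,c),(v_1,d)$ then cost $3>\tfrac12\cdot 4$. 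The final $\tfrac32$ bound happens to hold here, but the two-phase accounting (commit to a cheapest top-cover, then pay for the leftovers at half rate) cannot be repaired: a top-cover chosen without regard to the bottom edges can leave a residual that consumes essentially the entire LP value. There is also a second, smaller error: your bound $c(B)\le\sum_{\ell\in E}c_\ell x^*_\ell$ for the bottom-cover double-counts cross-links, since a cross-link's LP value is used in both subtree instances it touches; the honest bound from separate star instances is $c(B)\le c\cdot x^*|_I+2\,c\cdot x^*|_U$, which is precisely why a factor of $2$ on cross-links is unavoidable in that decomposition.

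The paper's proof uses a different mechanism: it never fixes one partial solution and pays for a residual, but instead builds two \emph{complete} feasible solutions and takes the cheaper one. Write $c_1$ for the LP mass on links with lca equal to $r$ and $c_2$ for the mass on in-subtree links. (i) Replacing every in-link $(u,v)$ with lca $c$ by the two up-links $(u,c),(v,c)$ (a feasibility-preserving doubling) turns the whole instance into one star-shaped instance centered at $r$, solved exactly via Theorem~\ref{oddstar} at cost at most $c_1+2c_2$. (ii) Replacing every cross-link $(u,v)$ by $(u,r),(r,v)$ turns the instance into $d+1$ disjoint star-shaped instances --- one around each internal child $c_i$, whose star includes the top edge so that $r$ is one of its leaves, plus the star of $r$ and its leaf children --- of total cost at most $2c_1+c_2$. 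Then
\begin{equation*}
\min\bigl(c_1+2c_2,\;2c_1+c_2\bigr)\le\tfrac{3}{2}(c_1+c_2),
\end{equation*}
and the same link-doubling maps fractional $\oddLP$ solutions to feasible fractional solutions of the star instances, whose integrality gap is $1$. The two ideas missing from your write-up are this lca-doubling trick and the hedge between two complementary decompositions, each doubling exactly one class of links. Finally, note that the paper's example in Figure~\ref{fig:blowupinlink} exhibits an extreme point of $\oddLP$ whose in-link coordinates cannot be scaled by any $\alpha<2$ so as to enter the integral TAP polytope; this is a strong warning that ``pay for the in-link/bottom part at half rate after fixing the cross-links'' is not a viable route, and that some form of averaging over two solutions is genuinely needed.
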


\begin{proof}
First we show how to transform any integral solution $A$ into a feasible solution to two star-shaped instances, the better of which has value at most $\frac32 \cdot c(A)$. The same reduction will also apply to fractional solutions that obey the ODD-LP constraints.

We say that the root $r$ is at level 1 and its children $\{ c_1, c_2, \ldots, c_d \}$ are internal nodes at level 2, where $d$ is the number of non-leaf children of the root.
First using Lemma~\ref{l2l}, we assume that all links go between a pair of leaves.
Given an optimal solution $A$, partition the links in it into $A = A_1\dot\cup A_2$ where $A_i$ the set of links whose least common ancestor (henceforth lca) is a node in level $i$ of the tree. (Note that the lca of any link will always be an internal node in any leaf-to-leaf instance like those that we consider).

Consider now two alternate instances with feasible solutions $A'_1$ and $A'_2$ as follows.

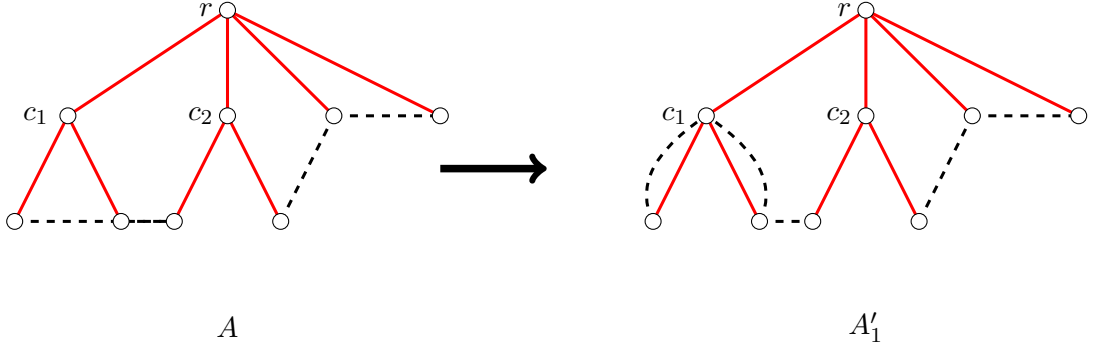
\begin{figure}[t]
		\centering
		\begin{tikzpicture}[scale=0.7]

		
		\begin{scope}
		
		\draw [-] [red, line width=0.4mm,xshift=0 cm] plot [smooth, tension=1] coordinates {(3,6) (0,4)};
		\draw [-] [red, line width=0.4mm,xshift=0 cm] plot [smooth, tension=1] coordinates {(3,6) (3,4)};
		\draw [-] [red, line width=0.4mm,xshift=0 cm] plot [smooth, tension=1] coordinates {(3,6) (5,4)};
		\draw [-] [red, line width=0.4mm,xshift=0 cm] plot [smooth, tension=1] coordinates {(3,6) (7,4)};
		\draw [-] [red, line width=0.4mm,xshift=0 cm] plot [smooth, tension=1] coordinates {(0,4) (-1,2)};
		\draw [-] [red, line width=0.4mm,xshift=0 cm] plot [smooth, tension=1] coordinates {(0,4) (1,2)};
		\draw [-] [red, line width=0.4mm,xshift=0 cm] plot [smooth, tension=1] coordinates {(3,4) (2,2)};
		\draw [-] [red, line width=0.4mm,xshift=0 cm] plot [smooth, tension=1] coordinates {(3,4) (4,2)};
		
		\draw [dashed] [black, line width=0.4mm,xshift=0 cm] plot [smooth, tension=1] coordinates {(5,4) (7,4)};
		\draw [dashed] [black, line width=0.4mm,xshift=0 cm] plot [smooth, tension=1] coordinates {(4,2) (5,4)};
		\draw [dashed] [black, line width=0.4mm,xshift=0 cm] plot [smooth, tension=1] coordinates {(1,2) (2,2)};
	    \draw [dashed] [black, line width=0.4mm,xshift=0 cm] plot [smooth, tension=1] coordinates {(-1,2) (2,2)};
		
		\draw[black,fill=white] (3,6) ellipse (0.15 cm  and 0.15 cm);	
		\draw[black,fill=white] (0,4) ellipse (0.15 cm  and 0.15 cm);			
		\draw[black,fill=white] (3,4) ellipse (0.15 cm  and 0.15 cm);
		\draw[black,fill=white] (5,4) ellipse (0.15 cm  and 0.15 cm);	
		\draw[black,fill=white] (7,4) ellipse (0.15 cm  and 0.15 cm);
		\draw[black,fill=white] (-1,2) ellipse (0.15 cm  and 0.15 cm);	
		\draw[black,fill=white] (1,2) ellipse (0.15 cm  and 0.15 cm);	
		\draw[black,fill=white] (2,2) ellipse (0.15 cm  and 0.15 cm);	
		\draw[black,fill=white] (4,2) ellipse (0.15 cm  and 0.15 cm);
		

		\node (r) at (2.6,6) {{$r$}};
		\node (c1) at (-0.6,4) {{$c_1$}};
		\node (c2) at (2.5,4) {{$c_2$}};
        \node (A) at (3,0) {$A$};

		\end{scope}
		
		\begin{scope}[xshift = 12 cm]
		
		\draw [-] [red, line width=0.4mm,xshift=0 cm] plot [smooth, tension=1] coordinates {(3,6) (0,4)};
		\draw [-] [red, line width=0.4mm,xshift=0 cm] plot [smooth, tension=1] coordinates {(3,6) (3,4)};
		\draw [-] [red, line width=0.4mm,xshift=0 cm] plot [smooth, tension=1] coordinates {(3,6) (5,4)};
		\draw [-] [red, line width=0.4mm,xshift=0 cm] plot [smooth, tension=1] coordinates {(3,6) (7,4)};
		\draw [-] [red, line width=0.4mm,xshift=0 cm] plot [smooth, tension=1] coordinates {(0,4) (-1,2)};
		\draw [-] [red, line width=0.4mm,xshift=0 cm] plot [smooth, tension=1] coordinates {(0,4) (1,2)};
		\draw [-] [red, line width=0.4mm,xshift=0 cm] plot [smooth, tension=1] coordinates {(3,4) (2,2)};
		\draw [-] [red, line width=0.4mm,xshift=0 cm] plot [smooth, tension=1] coordinates {(3,4) (4,2)};
		
		\draw [dashed] [black, line width=0.4mm,xshift=0 cm] plot [smooth, tension=1] coordinates {(5,4) (7,4)};
		\draw [dashed] [black, line width=0.4mm,xshift=0 cm] plot [smooth, tension=1] coordinates {(4,2) (5,4)};
		\draw [dashed] [black, line width=0.4mm,xshift=0 cm] plot [smooth, tension=1] coordinates {(1,2) (2,2)};

	    \draw [dashed] [black, line width=0.4mm,xshift=0 cm] plot [smooth, tension=1] coordinates {(-1,2)  (-1,3) (0,4)};
	    \draw [dashed] [black, line width=0.4mm,xshift=0 cm] plot [smooth, tension=1] coordinates {(1,2) (1,3) (0,4)};

		\draw[black,fill=white] (3,6) ellipse (0.15 cm  and 0.15 cm);	
		\draw[black,fill=white] (0,4) ellipse (0.15 cm  and 0.15 cm);			
		\draw[black,fill=white] (3,4) ellipse (0.15 cm  and 0.15 cm);
		\draw[black,fill=white] (5,4) ellipse (0.15 cm  and 0.15 cm);	
		\draw[black,fill=white] (7,4) ellipse (0.15 cm  and 0.15 cm);
		\draw[black,fill=white] (-1,2) ellipse (0.15 cm  and 0.15 cm);	
		\draw[black,fill=white] (1,2) ellipse (0.15 cm  and 0.15 cm);	
		\draw[black,fill=white] (2,2) ellipse (0.15 cm  and 0.15 cm);	
		\draw[black,fill=white] (4,2) ellipse (0.15 cm  and 0.15 cm);

		\node (r) at (2.6,6) {{$r$}};
		\node (c1) at (-0.6,4) {{$c_1$}};
		\node (c2) at (2.5,4) {{$c_2$}};
      \node (A'1) at (3,0) {$A'_1$};
		\end{scope}

		\draw [->] [black, line width=1mm,xshift=0 cm] plot coordinates {(7,3)(9,3)};
		\end{tikzpicture}
		
		\caption{Transformation to a star-shaped instance for the root}
		\label{fig:2levela}
\end{figure}

For an illustration of the first solution $A'_1$, see Figure~\ref{fig:2levela}.
For every link $(u,v)$ in $A_2$ with lca $c$ say, we replace it with two up-links $(u,c)$ and $(v,c)$ of the same cost. Note that this set of links along with $A_1$ gives a solution to a star-shaped instance centered at the root $r$. This solution has cost $c(A_1) + 2 c(A_2)$. Motivated by the existence of this solution, we can partition all the links $L = E(G) \setminus E(T)$ into $L = L_1 \dot\cup L_2$ where link $(u,v)$ is in $L_i$ if the $lca(u,v)$ is a node in level $i$ of the tree.
We then define a star-shaped instance centered at $r$ by replacing every link $(u,v)$ in $L_2$ with lca $c$ say, with two up-links $(u,c)$ and $(v,c)$ of the same cost. The minimum cost solution $A'_1$ we can find to this instance in polynomial time will have cost at most $c(A_1) + 2 c(A_2)$.

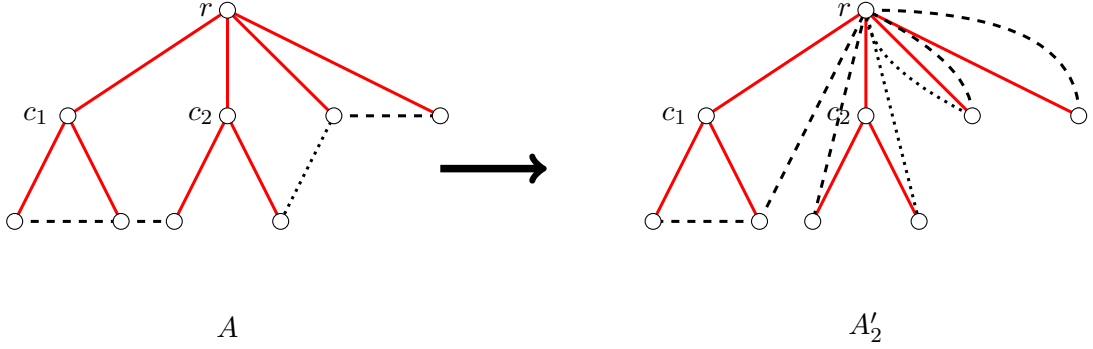
\begin{figure}[t]
		\centering
		\begin{tikzpicture}[scale=0.7]

		
		\begin{scope}
		
		\draw [-] [red, line width=0.4mm,xshift=0 cm] plot [smooth, tension=1] coordinates {(3,6) (0,4)};
		\draw [-] [red, line width=0.4mm,xshift=0 cm] plot [smooth, tension=1] coordinates {(3,6) (3,4)};
		\draw [-] [red, line width=0.4mm,xshift=0 cm] plot [smooth, tension=1] coordinates {(3,6) (5,4)};
		\draw [-] [red, line width=0.4mm,xshift=0 cm] plot [smooth, tension=1] coordinates {(3,6) (7,4)};
		\draw [-] [red, line width=0.4mm,xshift=0 cm] plot [smooth, tension=1] coordinates {(0,4) (-1,2)};
		\draw [-] [red, line width=0.4mm,xshift=0 cm] plot [smooth, tension=1] coordinates {(0,4) (1,2)};
		\draw [-] [red, line width=0.4mm,xshift=0 cm] plot [smooth, tension=1] coordinates {(3,4) (2,2)};
		\draw [-] [red, line width=0.4mm,xshift=0 cm] plot [smooth, tension=1] coordinates {(3,4) (4,2)};
		
		\draw [dashed] [black, line width=0.4mm,xshift=0 cm] plot [smooth, tension=1] coordinates {(5,4) (7,4)};
		\draw [dotted] [black, line width=0.4mm,xshift=0 cm] plot [smooth, tension=1] coordinates {(4,2) (5,4)};
		\draw [dashed] [black, line width=0.4mm,xshift=0 cm] plot [smooth, tension=1] coordinates {(1,2) (2,2)};
	    \draw [dashed] [black, line width=0.4mm,xshift=0 cm] plot [smooth, tension=1] coordinates {(-1,2) (1,2)};
		
		\draw[black,fill=white] (3,6) ellipse (0.15 cm  and 0.15 cm);	
		\draw[black,fill=white] (0,4) ellipse (0.15 cm  and 0.15 cm);			
		\draw[black,fill=white] (3,4) ellipse (0.15 cm  and 0.15 cm);
		\draw[black,fill=white] (5,4) ellipse (0.15 cm  and 0.15 cm);	
		\draw[black,fill=white] (7,4) ellipse (0.15 cm  and 0.15 cm);
		\draw[black,fill=white] (-1,2) ellipse (0.15 cm  and 0.15 cm);	
		\draw[black,fill=white] (1,2) ellipse (0.15 cm  and 0.15 cm);	
		\draw[black,fill=white] (2,2) ellipse (0.15 cm  and 0.15 cm);	
		\draw[black,fill=white] (4,2) ellipse (0.15 cm  and 0.15 cm);
		

		\node (r) at (2.6,6) {{$r$}};
		\node (c1) at (-0.6,4) {{$c_1$}};
		\node (c2) at (2.5,4) {{$c_2$}};
        \node (A) at (3,0) {$A$};

		\end{scope}
		
		\begin{scope}[xshift = 12 cm]
		
		\draw [-] [red, line width=0.4mm,xshift=0 cm] plot [smooth, tension=1] coordinates {(3,6) (0,4)};
		\draw [-] [red, line width=0.4mm,xshift=0 cm] plot [smooth, tension=1] coordinates {(3,6) (3,4)};
		\draw [-] [red, line width=0.4mm,xshift=0 cm] plot [smooth, tension=1] coordinates {(3,6) (5,4)};
		\draw [-] [red, line width=0.4mm,xshift=0 cm] plot [smooth, tension=1] coordinates {(3,6) (7,4)};
		\draw [-] [red, line width=0.4mm,xshift=0 cm] plot [smooth, tension=1] coordinates {(0,4) (-1,2)};
		\draw [-] [red, line width=0.4mm,xshift=0 cm] plot [smooth, tension=1] coordinates {(0,4) (1,2)};
		\draw [-] [red, line width=0.4mm,xshift=0 cm] plot [smooth, tension=1] coordinates {(3,4) (2,2)};
		\draw [-] [red, line width=0.4mm,xshift=0 cm] plot [smooth, tension=1] coordinates {(3,4) (4,2)};
		
		\draw [dashed] [black, line width=0.4mm,xshift=0 cm] plot [smooth, tension=1] coordinates {(5,4) (4.5,5) (3,6)};
        \draw [dashed] [black, line width=0.4mm,xshift=0 cm] plot [smooth, tension=1] coordinates {(3,6) (6, 5.5) (7,4)};
		\draw [dotted] [black, line width=0.4mm,xshift=0 cm] plot [smooth, tension=1] coordinates {(4,2) (3.5,4) (3,6)};
        \draw [dotted] [black, line width=0.4mm,xshift=0 cm] plot [smooth, tension=1] coordinates {(3,6) (3.5,5) (5,4)};
		\draw [dashed] [black, line width=0.4mm,xshift=0 cm] plot [smooth, tension=1] coordinates {(1,2) (2,4) (3,6)};
        \draw [dashed] [black, line width=0.4mm,xshift=0 cm] plot [smooth, tension=1] coordinates {(3,6) (2.5,4) (2,2)};
	    \draw [dashed] [black, line width=0.4mm,xshift=0 cm] plot [smooth, tension=1] coordinates {(1,2) (-1,2)};

		\draw[black,fill=white] (3,6) ellipse (0.15 cm  and 0.15 cm);	
		\draw[black,fill=white] (0,4) ellipse (0.15 cm  and 0.15 cm);			
		\draw[black,fill=white] (3,4) ellipse (0.15 cm  and 0.15 cm);
		\draw[black,fill=white] (5,4) ellipse (0.15 cm  and 0.15 cm);	
		\draw[black,fill=white] (7,4) ellipse (0.15 cm  and 0.15 cm);
		\draw[black,fill=white] (-1,2) ellipse (0.15 cm  and 0.15 cm);	
		\draw[black,fill=white] (1,2) ellipse (0.15 cm  and 0.15 cm);	
		\draw[black,fill=white] (2,2) ellipse (0.15 cm  and 0.15 cm);	
		\draw[black,fill=white] (4,2) ellipse (0.15 cm  and 0.15 cm);

		\node (r) at (2.6,6) {{$r$}};
		\node (c1) at (-0.6,4) {{$c_1$}};
		\node (c2) at (2.5,4) {{$c_2$}};
      \node (A'2) at (3,0) {$A'_2$};
		\end{scope}

		\draw [->] [black, line width=1mm,xshift=0 cm] plot coordinates {(7,3)(9,3)};
		\end{tikzpicture}
		
		\caption{Transformation to three star-shaped instances around the root and its two internal children}
		\label{fig:2levelb}
\end{figure}

For an illustration of the second solution $A'_2$, see Figure~\ref{fig:2levelb}.
In this case, we decompose the problem into $d+1$ different star shaped instances of which $d$ are from the subtrees defined by the star around each non-leaf child of the root, and the last is from the star defined by the root and its leaf-children. For this case, given a solution $A$ we replace every link $(u,v)$ in $A_1$ with lca the root $r$, with two up-links $(u,r)$ and $(r,v)$ of the same cost. Now consider the subtree $T_i$ defined by the star around child $c_i$ in $T$ for $i = 1,\ldots,d$. For every link in $A_1$ that has an endpoint in this subtree, one of the two copies made above goes from this endpoint to the root $r$ which is one of the leaves of this (star) tree. Similarly, the star around the root made of its leaf-children also has the copies of links in $A_1$ covering it. It is easy to verify that the subset of $A_2$ consisting of links with lca $c_i$ along with the copies of the $A_1$ links defined above give a feasible solution to the weighted TAP on this star $T_i$. This solution can be found in each such subtree as well as the star around the root,  and the sum of the costs of these solutions is $2c(A_1) + c(A_2)$. By defining the appropriate star-shaped subproblems as above, we can find in polynomial time, a solution $A'_2$ to the overall problem of cost at most $2c(A_1) + c(A_2)$.

Applying the above to the optimal solution $A^*$, we see that the best of the two solutions found above has cost at most $\min( c(A^*_1) + 2 c(A^*_2), 2c(A^*_1) + c(A^*_2) ) \leq \frac32 (c(A^*_1) + c(A^*_2)) = \frac32 c(A^*)$.

It is not hard to see that the values in any fractional solution on the links for the ODD-LP can be transformed into
a feasible fractional solution to these two sets of star shaped instances of value as claimed above. Since the resulting star shaped instances have integrality gap 1 by Theorem~\ref{oddstar}, the claim about the integrality gap also follows.
\end{proof}

\begin{thm}
The integrality gap of the ODD-LP for a three-level tree instance is at most $\frac74$.
\end{thm}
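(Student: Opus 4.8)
The plan is to imitate the two-level argument of Theorem~\ref{thm:twolevel}, pushing the same splitting idea one level deeper and organizing it as a recursion whose base case is the two-level bound. After applying Lemma~\ref{l2l} I would assume the instance is leaf-to-leaf, root the tree, call the root level $1$, its internal children level $2$, and their internal children level $3$, and partition a given solution (integral, or a fractional $\oddLP$ solution) as $A = A_1 \,\dot\cup\, A_2 \,\dot\cup\, A_3$, where $A_i$ collects the links whose least common ancestor sits at level $i$. Writing $c_i := c(A_i)$, the goal is to produce in polynomial time feasible solutions to a collection of star-shaped instances (integral and exactly solvable by Theorem~\ref{oddstar}) whose best member costs at most $\tfrac74(c_1+c_2+c_3)$, so that the minimum over the collection is at most $\tfrac74\,c(A)$.

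The basic move, exactly as in the two-level proof, is to replace a link $(u,v)$ with least common ancestor $w$ by the two up-links $(u,w)$ and $(w,v)$ at twice the cost; the two halves are valid up-links in every star along the $u$--$v$ path, so coverage is preserved while the instance becomes star-shaped. I would build solutions indexed by a binary choice at the root. In the first, treat the whole tree as a single star centred at $r$: keep $A_1$ as cross-links and split every link of $A_2\cup A_3$ into up-links to its lca, which is feasible and costs $c_1 + 2c_2 + 2c_3$. In the second, split every link of $A_1$ at $r$ into two up-links to $r$, attach $r$ as an extra leaf to each internal child $c_\ell$ to form a subtree $T_\ell^{+}$, and observe that each $T_\ell^{+}$ is a two-level instance whose links are the within-$T_\ell$ links of $A_2\cup A_3$ together with the copies of the $A_1$-links reaching into $T_\ell$. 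Applying Theorem~\ref{thm:twolevel} to every $T_\ell^{+}$ and summing reduces the three-level problem to two-level subproblems, with the copies of the $A_1$-links charged inside them. The target is the recursion $\alpha_3 = 1 + \tfrac12\alpha_2 = \tfrac74$, matching $2-\tfrac{1}{2^{3-1}}$.

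The subtlety is that a link of $A_1$ split at the root can be split a \emph{second} time inside a two-level subproblem $T_\ell^{+}$, so it may be duplicated up to four times, producing the coefficient vector $(4,2,1)$ for $(c_1,c_2,c_3)$; averaging this with the root-star vector $(1,2,2)$ and the intermediate vector $(2,1,2)$ only yields $\tfrac95$, not $\tfrac74$. The improvement must therefore come from splitting each link \emph{minimally}: a link should be split at a given level only on the part of its path whose tree edges are not already covered by shorter links, and its resulting copies must be distributed among edge-disjoint star-shaped sub-instances so that no copy is paid for twice. Organizing the decomposition this way should keep each link's duplication factor low enough that the convex combination of the resulting solutions closes at exactly $\tfrac74$.

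I expect this bookkeeping to be the main obstacle. Concretely, the hard part is to argue that after the refined splitting every star-shaped sub-instance is still feasible---in particular that the ``deep'' tree edges that were covered in $A$ only by higher links remain covered by the up-link copies---while simultaneously bounding, for each original link, the total number of copies it spawns across all sub-instances. Once the per-level duplication is controlled so that the achievable cost vectors admit a convex combination bounded by $\tfrac74\cdot\mathbf 1$, the minimum-cost sub-instance gives the integral guarantee. The final step is routine given the two-level template: the same splitting maps a fractional $\oddLP$ solution to feasible fractional solutions of the star-shaped instances of the same value, so the integrality-gap-$1$ guarantee of Theorem~\ref{oddstar} upgrades the construction to the claimed bound on the integrality gap.
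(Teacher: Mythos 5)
Your setup (leaf-to-leaf reduction via Lemma~\ref{l2l}, partition of $A$ by lca level, reduction to star-shaped instances solved exactly via Theorem~\ref{oddstar}, and transfer of the same oblivious splitting to fractional solutions) matches the paper, and your first candidate solution with cost vector $(1,2,2)$ is exactly the paper's. But there is a genuine gap, which you in fact flag yourself: recursively invoking Theorem~\ref{thm:twolevel} inside each subtree $T_\ell^{+}$ produces the cost vectors $(2,1,2)$ and $(4,2,1)$ (the $A_1$ copies, having lca $c_\ell$ in $T_\ell^{+}$, get doubled again in the second branch of the two-level argument), and the best convex combination of $(1,2,2)$, $(2,1,2)$, $(4,2,1)$ equalizes at $\tfrac95$: with weights $(\tfrac35,\tfrac15,\tfrac15)$ every coordinate is $\tfrac95$, and the instance $c=(1,1,3)$ shows $\tfrac95$ is the exact limit of these three vectors. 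The ``minimal splitting'' repair you sketch is never constructed: no feasibility argument is given, no duplication bound is proved, and because it splits a link depending on which \emph{other} links are in the solution, it is not an oblivious map and would also break the final step, where the identical substitution must carry a fractional ODD-LP point to fractional solutions of the star-shaped instances. As written, your argument establishes an integrality gap bound of $\tfrac95$, not $\tfrac74$.

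The idea you are missing is the paper's third, non-recursive family, which caps the duplication of $A_1$ at three instead of four. Take a star around each internal node $v$ at level 3 (consisting of $v$, its leaf children, and its parent edge), plus one residual instance rooted at $r$ on all remaining tree edges; the residual instance is itself a tree whose leaves include the level-2 nodes. Each $A_3$ link is already a cross-link of a level-3 star; each $A_2$ link is split into two up-links at its lca, which sit inside the level-3 stars (or the residual instance); and each $A_1$ link $(a,b)$ is split into exactly \emph{three} pieces $(a,c_a)$, $(c_a,c_b)$, $(c_b,b)$, where $c_a,c_b$ are the level-2 ancestors of $a,b$. The outer two pieces are cross-links of the level-3 stars, and the crucial point is that the middle piece $(c_a,c_b)$ is a cross-link (lca $=r$) of the residual instance, so the middle portion of the path is paid for once rather than being re-split as in your recursion. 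This yields the cost vector $(3,2,1)$, and with weights $(\tfrac12,\tfrac14,\tfrac14)$ on $(1,2,2)$, $(2,1,2)$, $(3,2,1)$ every coordinate equals $\tfrac74$, closing the bound. Note also that this split is oblivious---it depends only on the tree positions of the endpoints---which is exactly what allows the same construction to act on fractional ODD-LP solutions and invoke Theorem~\ref{oddstar}.
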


\begin{proof}
As before we will transform any integral solution $A$ into a feasible solution to one of three sets of star-shaped instances of value at most $\frac74 \cdot c(A)$. Again, the same reduction will also apply to fractional solutions that obey the ODD-LP constraints.

Using Lemma~\ref{l2l}, we assume that all links go between a pair of leaves.
Given an optimal solution $A$, partition the links in it into $A = A_1\dot\cup A_2 \dot\cup A_3$ where $A_i$ the set of links whose lca is a node in level $i$ of the tree. We say that the root $r$ is at level 1 and its non-leaf children $\{ c_1, c_2, \ldots, c_d \}$ are at level 2, and the children of these nodes that are internal nodes are in level 3 of the tree.

Consider now three alternate solutions $A'_1, A'_2$ and $A'_3$ as follows.

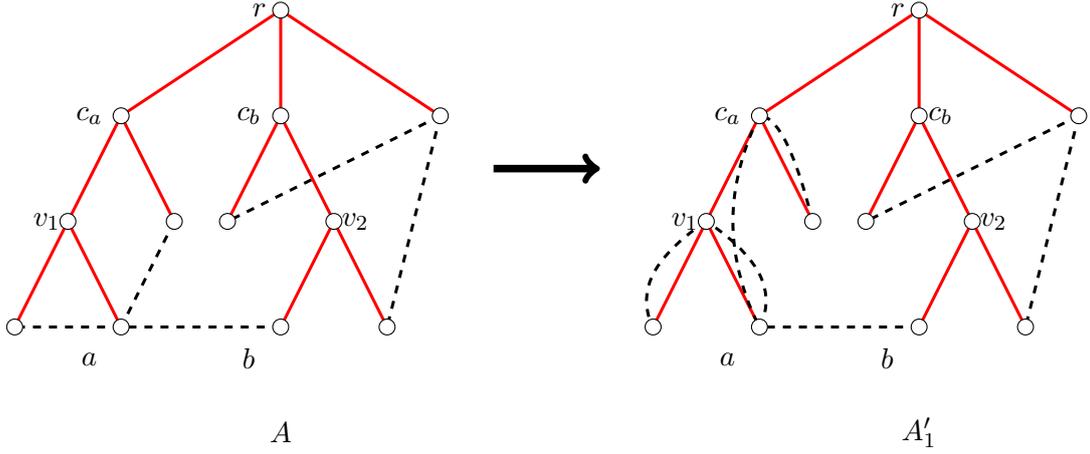
\begin{figure}[t]
		\centering
		\begin{tikzpicture}[scale=0.7]
		
		
		\begin{scope}
		
		\draw [-] [red, line width=0.4mm,xshift=0 cm] plot [smooth, tension=1] coordinates {(3,6) (0,4)};
		\draw [-] [red, line width=0.4mm,xshift=0 cm] plot [smooth, tension=1] coordinates {(3,6) (3,4)};
		\draw [-] [red, line width=0.4mm,xshift=0 cm] plot [smooth, tension=1] coordinates {(3,6) (6,4)};
		\draw [-] [red, line width=0.4mm,xshift=0 cm] plot [smooth, tension=1] coordinates {(0,4) (-1,2)};
		\draw [-] [red, line width=0.4mm,xshift=0 cm] plot [smooth, tension=1] coordinates { (-1,2) (-2,0)};		\draw [-] [red, line width=0.4mm,xshift=0 cm] plot [smooth, tension=1] coordinates {(-1,2) (0,0)};
		\draw [-] [red, line width=0.4mm,xshift=0 cm] plot [smooth, tension=1] coordinates {(0,4) (1,2)};
		\draw [-] [red, line width=0.4mm,xshift=0 cm] plot [smooth, tension=1] coordinates {(3,4) (2,2)};
		\draw [-] [red, line width=0.4mm,xshift=0 cm] plot [smooth, tension=1] coordinates {(3,4) (4,2)};
		\draw [-] [red, line width=0.4mm,xshift=0 cm] plot [smooth, tension=1] coordinates {(4,2) (5,0)};
		\draw [-] [red, line width=0.4mm,xshift=0 cm] plot [smooth, tension=1] coordinates {(4,2) (3,0)};
		
		\draw [dashed] [black, line width=0.4mm,xshift=0 cm] plot [smooth, tension=1] coordinates {(6,4) (5,0)};
		\draw [dashed] [black, line width=0.4mm,xshift=0 cm] plot [smooth, tension=1] coordinates {(3,0) (0,0)};
		\draw [dashed] [black, line width=0.4mm,xshift=0 cm] plot [smooth, tension=1] coordinates {(1,2) (0,0)};
		\draw [dashed] [black, line width=0.4mm,xshift=0 cm] plot [smooth, tension=1] coordinates {(6,4) (2,2)};
		\draw [dashed] [black, line width=0.4mm,xshift=0 cm] plot [smooth, tension=1] coordinates {(0,0) (-2,0)};
		
		\draw[black,fill=white] (3,6) ellipse (0.15 cm  and 0.15 cm);	
		\draw[black,fill=white] (0,4) ellipse (0.15 cm  and 0.15 cm);			
		\draw[black,fill=white] (3,4) ellipse (0.15 cm  and 0.15 cm);
		\draw[black,fill=white] (6,4) ellipse (0.15 cm  and 0.15 cm);	
		\draw[black,fill=white] (-1,2) ellipse (0.15 cm  and 0.15 cm);	
		\draw[black,fill=white] (1,2) ellipse (0.15 cm  and 0.15 cm);	
		\draw[black,fill=white] (2,2) ellipse (0.15 cm  and 0.15 cm);	
		\draw[black,fill=white] (4,2) ellipse (0.15 cm  and 0.15 cm);
		\draw[black,fill=white] (3,0) ellipse (0.15 cm  and 0.15 cm);	
		\draw[black,fill=white] (5,0) ellipse (0.15 cm  and 0.15 cm);		
		\draw[black,fill=white] (0,0) ellipse (0.15 cm  and 0.15 cm);			
		\draw[black,fill=white] (-2,0) ellipse (0.15 cm  and 0.15 cm);			

		\node (r) at (2.6,6) {{$r$}};
		\node (a) at (-0.6,-0.6) {{$a$}};
		\node (b) at (2.4,-0.6) {{$b$}};
		\node (ca) at (-0.6,4) {{$c_a$}};
		\node (cb) at (2.4,4) {{$c_b$}};
		\node (v2) at (4.4,2) {{$v_2$}};
		\node (v1) at (-1.4,2) {{$v_1$}};
        \node (A) at (3,-2) {$A$};
		\end{scope}
		
		\begin{scope}[xshift = 12 cm]
		
		\draw [-] [red, line width=0.4mm,xshift=0 cm] plot [smooth, tension=1] coordinates {(3,6) (0,4)};
		\draw [-] [red, line width=0.4mm,xshift=0 cm] plot [smooth, tension=1] coordinates {(3,6) (3,4)};
		\draw [-] [red, line width=0.4mm,xshift=0 cm] plot [smooth, tension=1] coordinates {(3,6) (6,4)};
		\draw [-] [red, line width=0.4mm,xshift=0 cm] plot [smooth, tension=1] coordinates {(0,4) (-1,2)};
		\draw [-] [red, line width=0.4mm,xshift=0 cm] plot [smooth, tension=1] coordinates { (-1,2) (-2,0)};		
        \draw [-] [red, line width=0.4mm,xshift=0 cm] plot [smooth, tension=1] coordinates {(-1,2) (0,0)};
		\draw [-] [red, line width=0.4mm,xshift=0 cm] plot [smooth, tension=1] coordinates {(0,4) (1,2)};
		\draw [-] [red, line width=0.4mm,xshift=0 cm] plot [smooth, tension=1] coordinates {(3,4) (2,2)};
		\draw [-] [red, line width=0.4mm,xshift=0 cm] plot [smooth, tension=1] coordinates {(3,4) (4,2)};
		\draw [-] [red, line width=0.4mm,xshift=0 cm] plot [smooth, tension=1] coordinates {(4,2) (5,0)};
		\draw [-] [red, line width=0.4mm,xshift=0 cm] plot [smooth, tension=1] coordinates {(4,2) (3,0)};
		
		\draw [dashed] [black, line width=0.4mm,xshift=0 cm] plot [smooth, tension=1] coordinates {(0,4) (-0.5,2) (0,0)};
		\draw [dashed] [black, line width=0.4mm,xshift=0 cm] plot [smooth, tension=1] coordinates {(0,0) (0,1) (-1,2)};
		\draw [dashed] [black, line width=0.4mm,xshift=0 cm] plot [smooth, tension=1] coordinates {(1,2) (0.5,3.5) (0,4)};
		\draw [dashed] [black, line width=0.4mm,xshift=0 cm] plot [smooth, tension=1] coordinates {(6,4) (5,0)};
		\draw [dashed] [black, line width=0.4mm,xshift=0 cm] plot [smooth, tension=1] coordinates {(3,0) (0,0)};
		\draw [dashed] [black, line width=0.4mm,xshift=0 cm] plot [smooth, tension=1] coordinates {(6,4) (2,2)};
		\draw [dashed] [black, line width=0.4mm,xshift=0 cm] plot [smooth, tension=1] coordinates {(-1,2) (-2,1) (-2,0)};
		\draw[black,fill=white] (3,6) ellipse (0.15 cm  and 0.15 cm);	
		\draw[black,fill=white] (0,4) ellipse (0.15 cm  and 0.15 cm);			
		\draw[black,fill=white] (3,4) ellipse (0.15 cm  and 0.15 cm);
		\draw[black,fill=white] (6,4) ellipse (0.15 cm  and 0.15 cm);	
		\draw[black,fill=white] (-1,2) ellipse (0.15 cm  and 0.15 cm);	
		\draw[black,fill=white] (1,2) ellipse (0.15 cm  and 0.15 cm);	
		\draw[black,fill=white] (2,2) ellipse (0.15 cm  and 0.15 cm);	
		\draw[black,fill=white] (4,2) ellipse (0.15 cm  and 0.15 cm);
		\draw[black,fill=white] (3,0) ellipse (0.15 cm  and 0.15 cm);	
		\draw[black,fill=white] (5,0) ellipse (0.15 cm  and 0.15 cm);		
		\draw[black,fill=white] (0,0) ellipse (0.15 cm  and 0.15 cm);			
		\draw[black,fill=white] (-2,0) ellipse (0.15 cm  and 0.15 cm);

		\node (r) at (2.6,6) {{$r$}};
		\node (a) at (-0.6,-0.6) {{$a$}};
		\node (b) at (2.4,-0.6) {{$b$}};
		\node (ca) at (-0.6,4) {{$c_a$}};
		\node (cb) at (3.4,4) {{$c_b$}};
		\node (v2) at (4.4,2) {{$v_2$}};
		\node (v1) at (-1.4,2) {{$v_1$}};
        \node (A'_1) at (3,-2) {$A'_1$};
		\end{scope}

		\draw [->] [black, line width=1mm,xshift=0 cm] plot coordinates {(7,3)(9,3)};
		\end{tikzpicture}
		
		\caption{Transformation to a star-shaped instance centered at the root}
		\label{fig:3levela}
\end{figure}

First we construct the solution $A'_1$ (See Figure~\ref{fig:3levela}) that uses links in $A_1$ once. For every link $(u,v)$ in $A_2 \cup A_3$ with lca $c$ say, we replace it with two up-links $(u,c)$ and $(v,c)$ of the same cost. Note that this set of links along with $A_1$ gives a solution to a star-shaped instance centered at the root $r$. This solution has cost $c(A_1) + 2 c(A_2) + 2c(A_3)$.
As before, to find such a solution, we can partition all the links $L = E(G) \setminus E(T)$ into $L = L_1 \dot\cup L_2 \dot\cup L_3$ where link $(u,v)$ is in $L_i$ if the $lca(u,v)$ is a node in level $i$ of the tree.
We then define a star-shaped instance centered at $r$ by replacing every link $(u,v)$ in $L_2 \cup L_3$ with lca $c$ say, with two up-links $(u,c)$ and $(v,c)$ of the same cost. The minimum cost solution $A'_1$ we can find to this instance in polynomial time will have cost at most $c(A_1) + 2 c(A_2) + 2c(A_3)$.

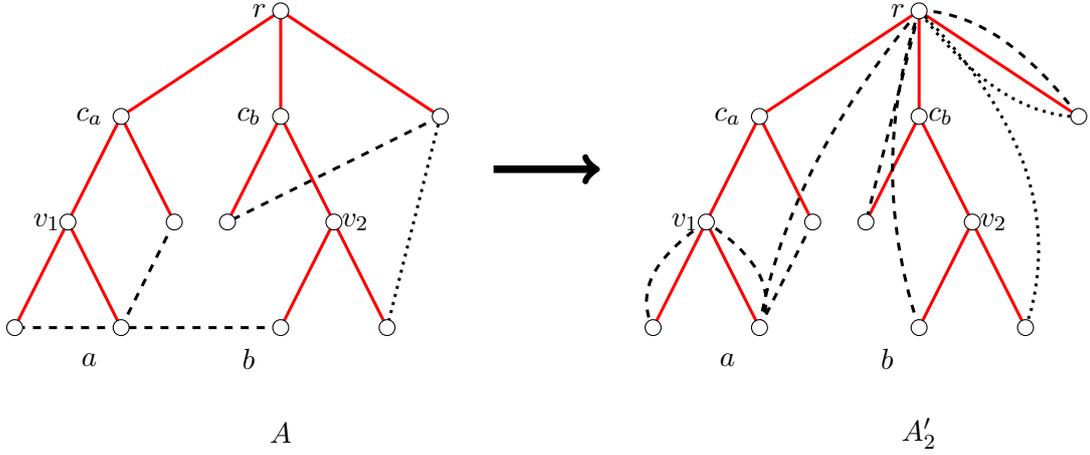
\begin{figure}[t]
		\centering
		\begin{tikzpicture}[scale=0.7]
		
		
		\begin{scope}
		
		\draw [-] [red, line width=0.4mm,xshift=0 cm] plot [smooth, tension=1] coordinates {(3,6) (0,4)};
		\draw [-] [red, line width=0.4mm,xshift=0 cm] plot [smooth, tension=1] coordinates {(3,6) (3,4)};
		\draw [-] [red, line width=0.4mm,xshift=0 cm] plot [smooth, tension=1] coordinates {(3,6) (6,4)};
		\draw [-] [red, line width=0.4mm,xshift=0 cm] plot [smooth, tension=1] coordinates {(0,4) (-1,2)};
		\draw [-] [red, line width=0.4mm,xshift=0 cm] plot [smooth, tension=1] coordinates { (-1,2) (-2,0)};		\draw [-] [red, line width=0.4mm,xshift=0 cm] plot [smooth, tension=1] coordinates {(-1,2) (0,0)};
		\draw [-] [red, line width=0.4mm,xshift=0 cm] plot [smooth, tension=1] coordinates {(0,4) (1,2)};
		\draw [-] [red, line width=0.4mm,xshift=0 cm] plot [smooth, tension=1] coordinates {(3,4) (2,2)};
		\draw [-] [red, line width=0.4mm,xshift=0 cm] plot [smooth, tension=1] coordinates {(3,4) (4,2)};
		\draw [-] [red, line width=0.4mm,xshift=0 cm] plot [smooth, tension=1] coordinates {(4,2) (5,0)};
		\draw [-] [red, line width=0.4mm,xshift=0 cm] plot [smooth, tension=1] coordinates {(4,2) (3,0)};
		
		\draw [dotted] [black, line width=0.4mm,xshift=0 cm] plot [smooth, tension=1] coordinates {(6,4) (5,0)};
		\draw [dashed] [black, line width=0.4mm,xshift=0 cm] plot [smooth, tension=1] coordinates {(3,0) (0,0)};
		\draw [dashed] [black, line width=0.4mm,xshift=0 cm] plot [smooth, tension=1] coordinates {(1,2) (0,0)};
		\draw [dashed] [black, line width=0.4mm,xshift=0 cm] plot [smooth, tension=1] coordinates {(6,4) (2,2)};
		\draw [dashed] [black, line width=0.4mm,xshift=0 cm] plot [smooth, tension=1] coordinates {(0,0) (-2,0)};
		
		\draw[black,fill=white] (3,6) ellipse (0.15 cm  and 0.15 cm);	
		\draw[black,fill=white] (0,4) ellipse (0.15 cm  and 0.15 cm);			
		\draw[black,fill=white] (3,4) ellipse (0.15 cm  and 0.15 cm);
		\draw[black,fill=white] (6,4) ellipse (0.15 cm  and 0.15 cm);	
		\draw[black,fill=white] (-1,2) ellipse (0.15 cm  and 0.15 cm);	
		\draw[black,fill=white] (1,2) ellipse (0.15 cm  and 0.15 cm);	
		\draw[black,fill=white] (2,2) ellipse (0.15 cm  and 0.15 cm);	
		\draw[black,fill=white] (4,2) ellipse (0.15 cm  and 0.15 cm);
		\draw[black,fill=white] (3,0) ellipse (0.15 cm  and 0.15 cm);	
		\draw[black,fill=white] (5,0) ellipse (0.15 cm  and 0.15 cm);		
		\draw[black,fill=white] (0,0) ellipse (0.15 cm  and 0.15 cm);			
		\draw[black,fill=white] (-2,0) ellipse (0.15 cm  and 0.15 cm);			

		\node (r) at (2.6,6) {{$r$}};
		\node (a) at (-0.6,-0.6) {{$a$}};
		\node (b) at (2.4,-0.6) {{$b$}};
		\node (ca) at (-0.6,4) {{$c_a$}};
		\node (cb) at (2.4,4) {{$c_b$}};
		\node (v2) at (4.4,2) {{$v_2$}};
		\node (v1) at (-1.4,2) {{$v_1$}};
        \node (A) at (3,-2) {$A$};
		\end{scope}
		
		\begin{scope}[xshift = 12 cm]
		
		\draw [-] [red, line width=0.4mm,xshift=0 cm] plot [smooth, tension=1] coordinates {(3,6) (0,4)};
		\draw [-] [red, line width=0.4mm,xshift=0 cm] plot [smooth, tension=1] coordinates {(3,6) (3,4)};
		\draw [-] [red, line width=0.4mm,xshift=0 cm] plot [smooth, tension=1] coordinates {(3,6) (6,4)};
		\draw [-] [red, line width=0.4mm,xshift=0 cm] plot [smooth, tension=1] coordinates {(0,4) (-1,2)};
		\draw [-] [red, line width=0.4mm,xshift=0 cm] plot [smooth, tension=1] coordinates { (-1,2) (-2,0)};		\draw [-] [red, line width=0.4mm,xshift=0 cm] plot [smooth, tension=1] coordinates {(-1,2) (0,0)};
		\draw [-] [red, line width=0.4mm,xshift=0 cm] plot [smooth, tension=1] coordinates {(0,4) (1,2)};
		\draw [-] [red, line width=0.4mm,xshift=0 cm] plot [smooth, tension=1] coordinates {(3,4) (2,2)};
		\draw [-] [red, line width=0.4mm,xshift=0 cm] plot [smooth, tension=1] coordinates {(3,4) (4,2)};
		\draw [-] [red, line width=0.4mm,xshift=0 cm] plot [smooth, tension=1] coordinates {(4,2) (5,0)};
		\draw [-] [red, line width=0.4mm,xshift=0 cm] plot [smooth, tension=1] coordinates {(4,2) (3,0)};
		
		\draw [dashed] [black, line width=0.4mm,xshift=0 cm] plot [smooth, tension=1] coordinates {(1,2) (0,0)};
		\draw [dashed] [black, line width=0.4mm,xshift=0 cm] plot [smooth, tension=1] coordinates {(3,6) (2.5, 4) (2,2)};
		\draw [dashed] [black, line width=0.4mm,xshift=0 cm] plot [smooth, tension=1] coordinates {(0,0) (0,1) (-1,2)};
		\draw [dashed] [black, line width=0.4mm,xshift=0 cm] plot [smooth, tension=1] coordinates {(6,4) (4.5, 5.5) (3,6)};
		\draw [dashed] [black, line width=0.4mm,xshift=0 cm] plot [smooth, tension=1] coordinates {(3,0) (2.5,3) (3,6)};
		\draw [dashed] [black, line width=0.4mm,xshift=0 cm] plot [smooth, tension=1] coordinates {(-1,2) (-2,1) (-2,0)};
		\draw [dashed] [black, line width=0.4mm,xshift=0 cm] plot [smooth, tension=1] coordinates {(3,6) (1,3) (0,0)};
		\draw [dotted] [black, line width=0.4mm,xshift=0 cm] plot [smooth, tension=1] coordinates {(3,6) (5,3) (5,0)};
		\draw [dotted] [black, line width=0.4mm,xshift=0 cm] plot [smooth, tension=1] coordinates {(6,4) (4.5, 4.5) (3,6)};

		\draw[black,fill=white] (3,6) ellipse (0.15 cm  and 0.15 cm);	
		\draw[black,fill=white] (0,4) ellipse (0.15 cm  and 0.15 cm);			
		\draw[black,fill=white] (3,4) ellipse (0.15 cm  and 0.15 cm);
		\draw[black,fill=white] (6,4) ellipse (0.15 cm  and 0.15 cm);	
		\draw[black,fill=white] (-1,2) ellipse (0.15 cm  and 0.15 cm);	
		\draw[black,fill=white] (1,2) ellipse (0.15 cm  and 0.15 cm);	
		\draw[black,fill=white] (2,2) ellipse (0.15 cm  and 0.15 cm);	
		\draw[black,fill=white] (4,2) ellipse (0.15 cm  and 0.15 cm);
		\draw[black,fill=white] (3,0) ellipse (0.15 cm  and 0.15 cm);	
		\draw[black,fill=white] (5,0) ellipse (0.15 cm  and 0.15 cm);		
		\draw[black,fill=white] (0,0) ellipse (0.15 cm  and 0.15 cm);			
		\draw[black,fill=white] (-2,0) ellipse (0.15 cm  and 0.15 cm);

		\node (r) at (2.6,6) {{$r$}};
		\node (a) at (-0.6,-0.6) {{$a$}};
		\node (b) at (2.4,-0.6) {{$b$}};
		\node (ca) at (-0.6,4) {{$c_a$}};
		\node (cb) at (3.4,4) {{$c_b$}};
		\node (v2) at (4.4,2) {{$v_2$}};
		\node (v1) at (-1.4,2) {{$v_1$}};
        \node (A'_2) at (3,-2) {$A'_2$};
		\end{scope}

		\draw [->] [black, line width=1mm,xshift=0 cm] plot coordinates {(7,3)(9,3)};
		\end{tikzpicture}
		
		\caption{Transformation to three star-shaped instances centered at the root and its two internal children}
		\label{fig:3levelb}
\end{figure}

For the second solution (Figure~\ref{fig:3levelb}), we proceed as before to decompose the problem into one per non-leaf neighbor $v_i$ of the root by considering the whole subtree $T_i$ under it along with its tree edge to the root, and one more for the root with its leaf children.
For this case, given a solution $A$ we replace every link $(u,v)$ in $A_1$ with lca the root $r$, with two up-links $(u,r)$ and $(r,v)$ of the same cost.
For every link $(u,v)$ in $A_3$ with lca $v'$ say, we replace it with two up-links $(u,v')$ and $(v,v')$ of the same cost.
Now consider the subtree $T_i$ defined by the non-leaf child $c_i$ in $T$ along with its tree edge to the root for $i = 1,\ldots,d$.
For every link in $A_1$ that has an endpoint in this subtree, one of the two copies made above goes from this endpoint to the root $r$ which is one of the leaves of this tree.
As before, the star around the root made of its leaf-children also has the copies of links in $A_1$ with an endpoint incident to each leaf covering the corresponding leaf child.
It is easy to verify that the solution $A_2$ consisting of links with lca $c_i$ along with the copies of the $A_1$ links defined above, and the doubled copies of links in $A_3$ give a feasible solution to the set of $d+1$ star-shaped instances of the weighted TAP on the $T_i$'s and the root. As before, a solution of at most this cost can be found in  suitably defined modified instances and the sum of the costs of these solutions is at most $2c(A_1) + c(A_2) + 2c(A_3)$.

\begin{figure}[t]
		\centering
		\begin{tikzpicture}[scale=0.7]
		
		
		\begin{scope}
		
		\draw [-] [red, line width=0.4mm,xshift=0 cm] plot [smooth, tension=1] coordinates {(3,6) (0,4)};
		\draw [-] [red, line width=0.4mm,xshift=0 cm] plot [smooth, tension=1] coordinates {(3,6) (3,4)};
		\draw [-] [red, line width=0.4mm,xshift=0 cm] plot [smooth, tension=1] coordinates {(3,6) (6,4)};
		\draw [-] [red, line width=0.4mm,xshift=0 cm] plot [smooth, tension=1] coordinates {(0,4) (-1,2)};
		\draw [-] [red, line width=0.4mm,xshift=0 cm] plot [smooth, tension=1] coordinates { (-1,2) (-2,0)};		\draw [-] [red, line width=0.4mm,xshift=0 cm] plot [smooth, tension=1] coordinates {(-1,2) (0,0)};
		\draw [-] [red, line width=0.4mm,xshift=0 cm] plot [smooth, tension=1] coordinates {(0,4) (1,2)};
		\draw [-] [red, line width=0.4mm,xshift=0 cm] plot [smooth, tension=1] coordinates {(3,4) (2,2)};
		\draw [-] [red, line width=0.4mm,xshift=0 cm] plot [smooth, tension=1] coordinates {(3,4) (4,2)};
		\draw [-] [red, line width=0.4mm,xshift=0 cm] plot [smooth, tension=1] coordinates {(4,2) (5,0)};
		\draw [-] [red, line width=0.4mm,xshift=0 cm] plot [smooth, tension=1] coordinates {(4,2) (3,0)};
		
		\draw [dotted] [black, line width=0.4mm,xshift=0 cm] plot [smooth, tension=1] coordinates {(6,4) (5,0)};
		\draw [solid] [black, line width=0.4mm,xshift=0 cm] plot [smooth, tension=1] coordinates {(3,0) (0,0)};
		\draw [dotted] [black, line width=0.4mm,xshift=0 cm] plot [smooth, tension=1] coordinates {(1,2) (0,0)};
		\draw [dashed] [black, line width=0.4mm,xshift=0 cm] plot [smooth, tension=1] coordinates {(6,4) (2,2)};
		\draw [dashed] [black, line width=0.4mm,xshift=0 cm] plot [smooth, tension=1] coordinates {(0,0) (-2,0)};
		
		\draw[black,fill=white] (3,6) ellipse (0.15 cm  and 0.15 cm);	
		\draw[black,fill=white] (0,4) ellipse (0.15 cm  and 0.15 cm);			
		\draw[black,fill=white] (3,4) ellipse (0.15 cm  and 0.15 cm);
		\draw[black,fill=white] (6,4) ellipse (0.15 cm  and 0.15 cm);	
		\draw[black,fill=white] (-1,2) ellipse (0.15 cm  and 0.15 cm);	
		\draw[black,fill=white] (1,2) ellipse (0.15 cm  and 0.15 cm);	
		\draw[black,fill=white] (2,2) ellipse (0.15 cm  and 0.15 cm);	
		\draw[black,fill=white] (4,2) ellipse (0.15 cm  and 0.15 cm);
		\draw[black,fill=white] (3,0) ellipse (0.15 cm  and 0.15 cm);	
		\draw[black,fill=white] (5,0) ellipse (0.15 cm  and 0.15 cm);		
		\draw[black,fill=white] (0,0) ellipse (0.15 cm  and 0.15 cm);			
		\draw[black,fill=white] (-2,0) ellipse (0.15 cm  and 0.15 cm);			

		\node (r) at (2.6,6) {{$r$}};
		\node (a) at (-0.6,-0.6) {{$a$}};
		\node (b) at (2.4,-0.6) {{$b$}};
		\node (ca) at (-0.6,4) {{$c_a$}};
		\node (cb) at (2.4,4) {{$c_b$}};
		\node (v2) at (4.4,2) {{$v_2$}};
		\node (v1) at (-1.4,2) {{$v_1$}};
        \node (A) at (3,-2) {$A$};
		\end{scope}
		
		\begin{scope}[xshift = 12 cm]
		
		\draw [-] [red, line width=0.4mm,xshift=0 cm] plot [smooth, tension=1] coordinates {(3,6) (0,4)};
		\draw [-] [red, line width=0.4mm,xshift=0 cm] plot [smooth, tension=1] coordinates {(3,6) (3,4)};
		\draw [-] [red, line width=0.4mm,xshift=0 cm] plot [smooth, tension=1] coordinates {(3,6) (6,4)};
		\draw [-] [red, line width=0.4mm,xshift=0 cm] plot [smooth, tension=1] coordinates {(0,4) (-1,2)};
		\draw [-] [red, line width=0.4mm,xshift=0 cm] plot [smooth, tension=1] coordinates { (-1,2) (-2,0)};		\draw [-] [red, line width=0.4mm,xshift=0 cm] plot [smooth, tension=1] coordinates {(-1,2) (0,0)};
		\draw [-] [red, line width=0.4mm,xshift=0 cm] plot [smooth, tension=1] coordinates {(0,4) (1,2)};
		\draw [-] [red, line width=0.4mm,xshift=0 cm] plot [smooth, tension=1] coordinates {(3,4) (2,2)};
		\draw [-] [red, line width=0.4mm,xshift=0 cm] plot [smooth, tension=1] coordinates {(3,4) (4,2)};
		\draw [-] [red, line width=0.4mm,xshift=0 cm] plot [smooth, tension=1] coordinates {(4,2) (5,0)};
		\draw [-] [red, line width=0.4mm,xshift=0 cm] plot [smooth, tension=1] coordinates {(4,2) (3,0)};
		
		\draw [solid] [black, line width=0.4mm,xshift=0 cm] plot [smooth, tension=1] coordinates {(0,4) (-0.5,2) (0,0)};
		\draw [dotted] [black, line width=0.4mm,xshift=0 cm] plot [smooth, tension=1] coordinates {(0,4) (0.5,2) (0,0)};
		\draw [dotted] [black, line width=0.4mm,xshift=0 cm] plot [smooth, tension=1] coordinates {(1,2) (0.5,3.5) (0,4)};
		\draw [dotted] [black, line width=0.4mm,xshift=0 cm] plot [smooth, tension=1] coordinates {(3,4) (4,3) (5,0)};
		\draw [dotted] [black, line width=0.4mm,xshift=0 cm] plot [smooth, tension=1] coordinates {(3,4) (4.5,3) (6,4)};
		\draw [dashed] [black, line width=0.4mm,xshift=0 cm] plot [smooth, tension=1] coordinates {(3,4) (4.5,5) (6,4)};
		\draw [dashed] [black, line width=0.4mm,xshift=0 cm] plot [smooth, tension=1] coordinates {(3,4) (2,3) (2,2)};
		\draw [solid] [black, line width=0.4mm,xshift=0 cm] plot [smooth, tension=1] coordinates {(3,0) (2.5,2) (3,4)};
		\draw [dashed] [black, line width=0.4mm,xshift=0 cm] plot [smooth, tension=1] coordinates {(0,0) (-2,0)};
		\draw [solid] [black, line width=0.4mm,xshift=0 cm] plot [smooth, tension=1] coordinates {(0,4) (1.5,3) (3,4)};

		\draw[black,fill=white] (3,6) ellipse (0.15 cm  and 0.15 cm);	
		\draw[black,fill=white] (0,4) ellipse (0.15 cm  and 0.15 cm);			
		\draw[black,fill=white] (3,4) ellipse (0.15 cm  and 0.15 cm);
		\draw[black,fill=white] (6,4) ellipse (0.15 cm  and 0.15 cm);	
		\draw[black,fill=white] (-1,2) ellipse (0.15 cm  and 0.15 cm);	
		\draw[black,fill=white] (1,2) ellipse (0.15 cm  and 0.15 cm);	
		\draw[black,fill=white] (2,2) ellipse (0.15 cm  and 0.15 cm);	
		\draw[black,fill=white] (4,2) ellipse (0.15 cm  and 0.15 cm);
		\draw[black,fill=white] (3,0) ellipse (0.15 cm  and 0.15 cm);	
		\draw[black,fill=white] (5,0) ellipse (0.15 cm  and 0.15 cm);		
		\draw[black,fill=white] (0,0) ellipse (0.15 cm  and 0.15 cm);			
		\draw[black,fill=white] (-2,0) ellipse (0.15 cm  and 0.15 cm);

		\node (r) at (2.6,6) {{$r$}};
		\node (a) at (-0.6,-0.6) {{$a$}};
		\node (b) at (2.4,-0.6) {{$b$}};
		\node (ca) at (-0.6,4) {{$c_a$}};
		\node (cb) at (3.4,4) {{$c_b$}};
		\node (v2) at (3.4,2) {{$v_2$}};
		\node (v1) at (-1.4,2) {{$v_1$}};
        \node (A'_3) at (3,-2) {$A'_3$};
		\end{scope}

		\draw [->] [black, line width=1mm,xshift=0 cm] plot coordinates {(7,3)(9,3)};
		\end{tikzpicture}
		
		\caption{Transformation to three star-shaped instances centered at the root and the stars around the two internal nodes in level 3}
		\label{fig:3levelc}
\end{figure}
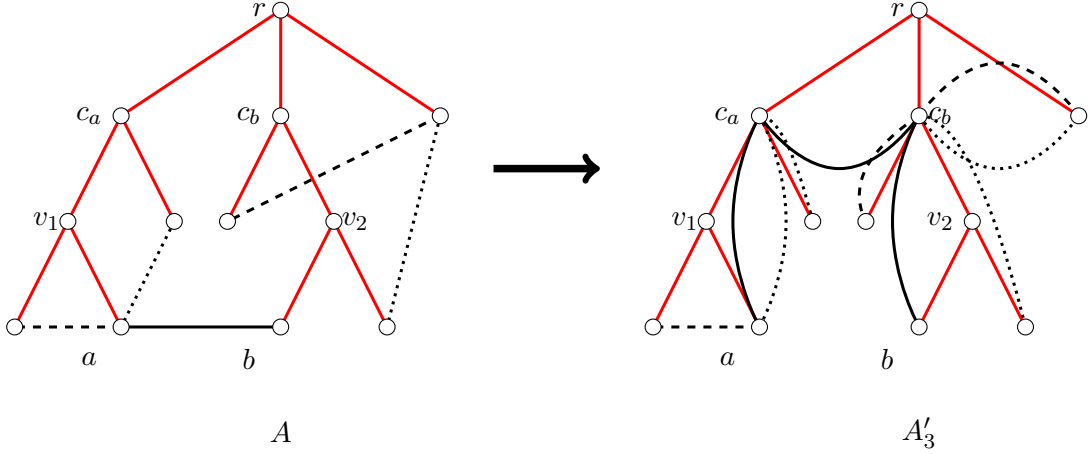

Finally, for the third solution (See Figure~\ref{fig:3levelc}), we consider the stars around the internal nodes, say $v_1,\ldots, v_q$ in level 3, and one more tree around the root consisting all the set of all tree edges not in the stars around the $v_i$'s.
To obtain a set of star-shaped solutions from $A$ for these instances we proceed as follows.
For every link $(a,b)$ in $A_2$ with lca $c$ say, we replace it with two up-links $(a,c)$ and $(b,c)$ of the same cost. Note that the lca $c$ is a leaf in one of the third level stars and so all these copies become star-shaped links for those corresponding instances.
The interesting transformation is for links in $A_1$ where we now make up to {\em three} copies.
For every link $(a,b) \in A_1$, let $c_a$ and $c_b$ denote the ancestor of $a$ and $b$ respectively in level 1. (if either $a$ or $b$ is in level 1 itself, then its ancestor in level 1 is itself). We now add three links $(a,c_a), (c_a,c_b), (c_b,b)$ of the same cost as $(a,b)$. Note that the first and third link are leaf to leaf cross links in the stars corresponding to centers $v_a$ and $v_b$ (the ancestors of $a$ and $b$ in level 2 if they exist), and that the middle link $(c_a,c_b)$ is a cross link in the star-shaped instance centered at the root.
It is now easy to verify that the copies that we have produced form a set of feasible solutions to these star-shaped instances of total cost at most $3c(A_1) + 2c(A_2) + c(A_3)$.

The best of the above three solutions corresponding to the optimal solution $A^*$ has cost at most $\min( c(A^*_1) + 2 c(A^*_2)+2c(A^*_3), 2c(A^*_1) + c(A^*_2)+2c(A^*_3), 3c(A^*_1)+2c(A^*_2)+c(A^*_3) ) \leq \frac74 (c(A^*_1) + c(A^*_2) +c(A^*_3)) = \frac74 c(A^*)$.

As before, it is not hard to see that the values in any fractional solution on the links for the ODD-LP can be transformed into
a feasible fractional solution to these three sets of star shaped instances of value as claimed above. Since the resulting star shaped instances have integrality gap 1 by Theorem~\ref{oddstar}, the claim about the integrality gap also follows.

\end{proof}

\section{Integrality gap for $k$-level trees}

With the above cases, we can now calculate an upper bound on the value of the integrality gap for general $k$-level trees where the depth of any leaf from the root is $k$.

\begin{thm}
The integrality gap of the ODD-LP for a $k$-level tree instance is at most \highlight{$2 - \frac{1}{2^{k-1}}$}.
\end{thm}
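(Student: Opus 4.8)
The plan is to generalize the two explicit constructions used for depth $2$ and $3$ into a family of $k$ candidate solutions, one for each ``cut level'' $j \in \{1,\dots,k\}$, and to output the cheapest among them. First I would partition any integral solution $A$ (and, verbatim, any feasible fractional ODD-LP solution) as $A = A_1 \dot\cup \cdots \dot\cup A_k$, where $A_i$ collects the links whose lca lies at level $i$, and write $c(A)=\sum_i c(A_i)$. For a fixed $j$ I would build a decomposition of the tree edges into star-shaped instances exactly as before: links whose lca is strictly below the cut (those in $A_i$ with $i>j$) are replaced by two up-links to their lca inside a single deep subtree-star and so cost twice; links whose lca sits exactly at level $j$ become cross-links used once; and each link whose lca is above the cut (in $A_i$ with $i<j$) is broken into $j-i+1$ pieces that are cross- or up-links of the stars met along its tree path, generalizing the single-, double- and triple-copy operations seen for $k=2,3$. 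The resulting feasible solution has cost
\[
C_j \;=\; \sum_{i=1}^{j}(j-i+1)\,c(A_i) \;+\; 2\!\!\sum_{i=j+1}^{k}\! c(A_i),
\]
i.e.\ coefficient $\alpha_{j,i}=j-i+1$ for $i\le j$ and $\alpha_{j,i}=2$ for $i>j$; one checks this matches the matrices $(1,2),(2,1)$ and $(1,2,2),(2,1,2),(3,2,1)$ obtained for $k=2,3$.

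The heart of the $k$-level bound is then a purely numerical averaging step. Since the algorithm returns $\min_j C_j$, for any weights $\lambda_j\ge 0$ with $\sum_j\lambda_j=1$ we have $\min_j C_j \le \sum_j \lambda_j C_j = \sum_i\big(\sum_j \lambda_j \alpha_{j,i}\big) c(A_i)$. I would choose the geometric weights $\lambda_j = 2^{-j}$ for $j<k$ and $\lambda_k = 2^{-(k-1)}$, which sum to $1$, and show that the bracketed quantity $F(i):=\sum_j \lambda_j \alpha_{j,i}$ does not depend on $i$. The clean way is to compute the consecutive difference $F(i)-F(i+1)=\sum_{j>i}\lambda_j-\lambda_i$, which vanishes for every $i<k$ precisely because $\lambda_i=2^{-i}=\sum_{j>i}\lambda_j$. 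Hence $F$ is constant, and evaluating it at $i=1$ (where $\alpha_{j,1}=j$) gives $F(1)=\sum_{j=1}^{k-1} j\,2^{-j}+k\,2^{-(k-1)} = 2-2^{-(k-1)}$, using $\sum_{j=1}^{n} j\,2^{-j}=2-(n+2)2^{-n}$. Therefore $\min_j C_j \le \big(2-2^{-(k-1)}\big)\,c(A)$, which is the claimed ratio $2-\tfrac{1}{2^{k-1}}$.

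Applying this to an optimal integral $A^\ast$ bounds the integral optimum against the best star solution; applying the identical decomposition to an optimal fractional ODD-LP solution (the earlier remark already lets us assume the leaf-to-leaf form) yields, for each $j$, a feasible fractional solution to the same star-shaped families with the same cost $C_j$. Because every star-shaped instance has ODD-LP integrality gap $1$ by Theorem~\ref{oddstar}, each such family can be solved integrally at no loss, so the convex-combination inequality upgrades to a bound on the integrality gap and simultaneously gives a polynomial-time algorithm with matching guarantee.

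I expect the averaging identity to be the easy part; the genuine work — and the main obstacle — is verifying the general ``cut level $j$'' construction for intermediate $j$, namely that the tree edges partition into star-shaped instances in which an lca-at-level-$i$ link above the cut decomposes into exactly $j-i+1$ valid segments (no more), and that each segment is legitimately a cross- or up-link of its star so that Theorem~\ref{oddstar} applies. Getting this bookkeeping right for the internal (leaf-to-leaf dummy) edges at every level, uniformly in $j$, is where I would spend the most care; once the coefficient matrix $\alpha_{j,i}$ is established, the bound follows from the averaging computation above.
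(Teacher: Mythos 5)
Your skeleton --- partition the optimum by lca level, build $k$ candidate solutions, take the cheapest, and push the identical replacement through a fractional ODD-LP point using Theorem~\ref{oddstar} --- is exactly the paper's, and your averaging arithmetic is internally consistent. The genuine gap is the step you yourself flagged, and it is not mere bookkeeping: the coefficient matrix $\alpha_{j,i}=j-i+1$ is not what the construction you invoke (``exactly as before'': subtree-stars at the cut level $j$ plus a single residual instance rooted at $r$) actually yields. With that decomposition, the tree path of a link $(a,b)$ with lca $w$ at level $i<j-1$ meets only \emph{three} instances: the star at $v_a$, the root instance, and the star at $v_b$ (where $v_a,v_b$ are the level-$j$ ancestors and $u_a,u_b$ their parents). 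The middle portion $u_a\to w\to u_b$ lies in the root instance, where the single link $(u_a,u_b)$ is neither a cross- nor an up-link unless $w=r$; so that portion costs two up-links whenever $1<i<j-1$, and only one cross-link when $i=1$. Hence the achievable coefficients are capped --- $1$ at $i=j$, $2$ at $i=j-1$, $4$ for $1<i<j-1$, $3$ at $i=1$ (the paper's matrix; its row $j=4$ is $(3,4,2,1)$) --- whereas you claim row $(4,3,2,1)$: your entry $\alpha_{j,j-2}=3$ is unachievable, and your linearly growing entries for small $i$ are unnecessary. This matters because your weights $\lambda_j=2^{-j}$ are tuned to your matrix; with the correct matrix they fail to certify the bound. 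For $k=4$,
\[
\sum_j \lambda_j\,\alpha_{j,2} \;=\; 2\cdot\tfrac12 \;+\; 1\cdot\tfrac14 \;+\; 2\cdot\tfrac18 \;+\; 4\cdot\tfrac18 \;=\; 2 \;>\; \tfrac{15}{8} \;=\; 2-\tfrac{1}{2^{3}},
\]
so the averaging argument no longer delivers $2-2^{-(k-1)}$.

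There are two ways to close the gap, and they are genuinely different. (a) Keep the capped matrix and find matching weights: for $k=4$ these are $(\tfrac38,\tfrac38,\tfrac18,\tfrac18)$, not geometric; this is in substance the paper's route (the paper argues by choosing costs that equalize all $C_l$ rather than exhibiting dual weights, and arrives at the same value). (b) Keep your matrix $j-i+1$, but then you need a decomposition the paper never uses: star instances at every \emph{other} level $j, j-2, j-4,\dots$, where each sub-cut star consists of a center, its child edges, its parent edge, and the edges to leaf-children of its children. Then a link with lca at level $i$ really does split into $j-i+1$ cross-links, one per star crossed along its path, and your geometric weights give exactly $2-2^{-(k-1)}$. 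Option (b) can be made to work and is arguably cleaner, but it contradicts your phrase ``exactly as before'' and is precisely the verification you deferred; as written, the proposal does not establish the theorem.
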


\begin{proof}
We show how to transform any integral solution $A$ into a feasible solution to one of $k$ star-shaped instances.
Partition the links in $A$ into subsets of links $A = A_1\dot\cup A_2 \ldots \dot\cup A_{k}$  where $A_l$ is the subset whose lca is a node in level $l$ of the tree for $l = 1,\ldots,k$.
Denote the cost of these subsets of links by $c_1,\ldots,c_k$ so that the total cost of $A$ is $c = \sum_{l=1}^k c_k$.
As before we set up $k$ sets of solutions, with the $l^{th}$ solution attempting to use edges in $A_l$ only once.

Note that for $l=1$, we replace all links in $A_2,\ldots,A_k$ with two links going to the lca and decompose the resulting solution into one for a star-shaped instance around the root. The cost of this candidate solution is
$$\highlight{C_1 =\ }c_1 + 2c_2 + \ldots + 2c_k.$$

\highlight{For $1 < l \leq k$}, for every internal node $v$ at level \highlight{$l$}, we consider the \highlight{subtree below it, along with the edge to its parent} and create the solution for this star-shaped instance from the solution $A$.
In addition we create one star-shaped instance around the root\highlight{, whose tree edges are disjoint from the others,} to create a final candidate solution.
First consider the \highlight{star-shaped instances} around the internal nodes $v$ in level \highlight{$l$}.
Links in $A_\highlight{l}$ are already cross links in these.
For any link $(a,b) \in \highlight{A_{l-1} \cup \bigcup_{p > l} A_p,}$ we replace it with the two links $(a,lca(a,b))$ and $(b,lca(a,b))$. \highlight{Links in $A_p$ for $p > l$ are replaced with two links that become up links in these instances.}
Consider a link $(a,b) \in A_{\highlight{l}-1}$, such that $v_a$ and $v_b$ are the \highlight{ancestors} of $a$ and $b$ respectively \highlight{that are} in level \highlight{$l$}. We replaced this link with the two links $(a,lca(a,b))$ and $(b,lca(a,b))$. Now $lca(a,b)$ is a parent of $v_a$ and $v_b$ since $(a,b) \in A_{\highlight{l}-1}$ so these links form cross links for the \highlight{star-shaped instances} around $v_a$ and $v_b$.
All the tree edges not in \highlight{any of these star-shaped instances are} considered in a final star-shaped instance \highlight{rooted at $r$}.
For links $(a,b) \in A_\highlight{q}$ for \highlight{$1<q < l-1$}, let the ancestors of $a$ and $b$ in level $\highlight{l-1}$ be $u_a$ and $u_b$ respectively\highlight{, if they exist.}
We replace $(a,b)$ with \highlight{one of the following sets, with at most four links: $\{(a, u_a),(u_a,lca(u_a,u_b)),(lca(u_a,u_b),u_b),(u_b,b)\}$, or $\{(a, lca(a,u_b)),(lca(a,u_b),u_b),(u_b,b)\}$, or $\{(a, u_a),(u_a,lca(u_a,b)),(lca(u_a,b),b)\}$, or $\{(a,lca(a,b)),(lca(a,b),b)\},$ depending on which of $u_a$ and $u_b$ exist.  For $q > 1$, all the links in these sets are cross links for the star-shaped instances around the level $l$ internal nodes or up links for the instance rooted at $r$.  Analogously, for $q=1$, we can instead use the following sets, with at most three links: $\{(a,u_a),(u_a,u_b),(u_b,b)\}$, $\{(a,u_b),(u_b,b)\}$, $\{(a,u_a),(u_a,b)\}$, $\{(a,b)\}$.  
In contrast to the $q > 1$ case, these sets also include cross links for the instance rooted at r.}

\highlight{Based on the above construction, an upper bound on} the cost of this set of candidate solutions is
\highlight{\begin{gather*}
C_2 = 2c_1 + c_2 + 2c_3 + 2c_4 + \ldots + 2c_k,\text{ if }l=2\\
C_3 = 3c_1 + 2c_2 + c_3 + 2c_4 + \ldots + 2c_k,\text{ if }l=3\\
C_l = 3c_1 + 4c_2 +\ldots + 4c_{l-2} + 2c_{l-1} + c_l + 2c_{l+1} + \ldots + 2c_1,\text{ if }l > 3.\\
\end{gather*}}
To find the worst case ratio of $\min(C_1,\ldots,C_k)$ and $c_1 + \ldots + c_k$, we show we can set the costs so that all the terms in the numerator are equal.

Setting  $C_1=C_2$ gives $c_1 + 2c_2 + \ldots + 2c_k = 2c_1 + c_2 + 2c_3 + \ldots + 2c_k$ which simplifies to
$$c_1 = c_2.$$
Setting $C_2=C_3$ gives $2c_1 + c_2 + 2c_3 + \ldots + 2c_k = 3c_1 + 2c_2 + c_3 + 2c_4 + \ldots + 2c_k$ which simplifies to
$$c_3 = 2c_1 = c_1 + c_2.$$
Setting $C_3=C_4$ gives $ 3c_1 + 2c_2 + c_3 + 2c_4 + \ldots + 2c_k = 3c_1 + \highlight{4}c_2 + 2c_3 + c_4 + 2c_5 + \ldots + 2c_k $ which simplifies to
$$c_4 = \highlight{2}c_2 + c_3.$$
In general, setting $C_{l}=C_{l+1}$ gives
$$c_{l+1} = \highlight{2}c_{l-1} + c_l.$$
The worst case ratio is then
$$\frac{C_1}{c_1 + \ldots + c_k} = \frac{c_1 + 2c_2 + \ldots + 2c_k}{c_1 + \ldots + c_k} =2 - \frac{1}{\highlight{1+\sum_{2 \leq l \leq k} 2^{l-2}}} \highlight{= 2 - \frac{1}{2^{k-1}}.}$$
\end{proof}

While the above analysis shows integrality gaps of the ODD-LP converging to 2 as the depth of the tree grows, the main open question in our opinion is to show that the integrality gap of $\frac32$ that we showed for 2-level trees is indeed the upper bound for all trees.


\section{Tight example and a lower bound on the odd-LP}

In Theorem \ref{thm:twolevel}, we showed that it is possible to obtain a feasible TAP solution of weight $c(A_1) + 2c(A_2)$, where $A = A_1 \dot \cup A_2$ is an optimal TAP solution. To improve upon the bound in Theorem \ref{thm:twolevel}, a natural idea is to try to obtain a solution of cost $c(A_1) + \alpha c(A_2)$, where $\alpha < 2$. Note that any strengthening of this form immediately yields an upper bound less than $\frac{3}{2}$ on the integrality gap of the odd-LP for 2-level TAP. However, we show that a direct improvement in this way is impossible. 

By Lemma \ref{l2l}, without loss of generality we consider a leaf-to-leaf instance $(T,L)$. If $u \in$ odd-LP$(T,L)$, we will write $u = (x,y)$ where $x$ is the projection of $u$ onto the cross-links and $y$ is its projection onto the in-links.

\begin{thm}
Let $(T,L)$ be the TAP instance given in Figure~\ref{fig:blowupinlink}, where $(x,y) = (\frac{1}{2},\frac{1}{2},\frac{1}{2},\frac{1}{2},\frac{1}{2},1)$ is an extreme point of $\oddLP(T,L)$. Then $(x,\alpha y) \not \in \tap(T,L)$ for any $\alpha < 2$. 
\end{thm}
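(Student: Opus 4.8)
The plan is to refute membership in $\tap(T,L)$ by an averaging (LP‑duality) certificate. If $(x,\alpha y)\in\tap(T,L)$ then it is a convex combination (or dominates one) $\sum_i\lambda_i\chi^{A_i}$ of integral feasible TAP solutions $A_i$, so for every nonnegative link‑weighting $w=(w_{\mathrm{cross}},w_0)\ge 0$ we have $w\cdot(x,\alpha y)\ge\sum_i\lambda_i\,w(A_i)\ge\min_{A\text{ feasible}}w(A)$. Hence it suffices to exhibit one $w\ge 0$ with $w\cdot(x,\alpha y)<\min_A w(A)$ for all $\alpha<2$. Using $x=(\tfrac12,\dots,\tfrac12)$ and $y=1$, this reads $\tfrac12\sum_j w_j+\alpha w_0<\min_A w(A)$; since the left side increases in $\alpha$, I must calibrate $w$ so that the inequality becomes an equality exactly at $\alpha=2$, i.e.\ so that $\min_A w(A)=\tfrac12\sum_j w_j+2w_0$, which is precisely what manufactures the threshold $2$.

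For the range $\alpha<1$ no calibration is needed: since $(x,y)$ is a vertex of $\oddLP(T,L)$ it saturates some odd‑cut constraint in which the in‑link appears with positive coefficient, and because all odd‑LP coefficients are nonnegative, lowering the in‑link value from $1$ to $\alpha<1$ strictly violates that constraint, so $(x,\alpha y)\notin\oddLP(T,L)\supseteq\tap(T,L)$. The same monotonicity shows $(x,\alpha y)$ satisfies every odd‑LP inequality once $\alpha\ge 1$, so on the interval $[1,2)$ the certificate must encode an inequality valid for the integer hull but \emph{not} for the odd‑LP. Concretely I would take $w$ equal to $1$ on each of the five cross‑links and a small positive $\beta$ on the in‑link, so that $w\cdot(x,\alpha y)=\tfrac52+\beta\alpha$ and the target identity becomes $\min_A w(A)=\tfrac52+2\beta$.

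The heart of the proof is thus a combinatorial lemma bounding $\min_A w(A)$, equivalently a valid inequality $\mathrm{cross\text{-}count}(A)+\beta\cdot(\text{in\text{-}link usage of }A)\ge \tfrac52+2\beta$ for every feasible integral $A$. The source of the bound is that, in the instance of Figure~\ref{fig:blowupinlink}, the five cross‑links form an odd (pentagon) pattern on the leaf‑edges: any integral solution must select a set of cross‑links hitting all five leaves, i.e.\ an edge cover of an odd $5$‑cycle, whose minimum size is $3$, strictly above the fractional value $\tfrac52$. The in‑link covers a distinct (higher) edge pinned down by the tight odd cut, and the blow‑up is arranged so that routing coverage through the in‑link cannot let an integral solution undercut this edge‑cover requirement by more than the calibrated amount; combining the two modes of covering, both bottom out at $\tfrac52+2\beta$. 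Averaging then gives $\tfrac52+\beta\alpha\ge \tfrac52+2\beta$, hence $\alpha\ge 2$, and the contrapositive is exactly the statement of the theorem.

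I expect the main obstacle to be this lemma — pinning down $\min_A w(A)$ exactly and, in particular, checking that the in‑link carries a penalty corresponding to a full factor of $2$ rather than less. This is delicate parity bookkeeping: one must verify both that covering through the cross‑links genuinely cannot dip to $\tfrac52$ (the oddness of the cycle is essential; an even cycle would be edge‑covered fractionally at $\tfrac12$ and collapse the bound) and that substituting the in‑link does not let some integral solution drop below three cross‑links. I would discharge it by a finite case analysis over how an integral solution covers the five pentagon leaf‑edges together with the blown‑up edge, using the tight odd‑cut constraint (the ``$+1$'' surplus) as the precise source of the extra unit of cost that forces the threshold to be $2$ and not $\tfrac32$.
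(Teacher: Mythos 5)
Your high-level strategy --- exhibiting a nonnegative link-weighting $w$ with $w\cdot(x,\alpha y) < \min_A w(A)$ for all $\alpha<2$ --- is sound in principle (the TAP polyhedron is upward closed, so a nonnegative separating functional exists whenever the point is outside), but your instantiation rests on a misreading of the instance, and the key lemma you plan to prove concerns a structure that is not there. In Figure~\ref{fig:blowupinlink} the vector $(\frac12,\frac12,\frac12,\frac12,\frac12,1)$ splits as \emph{four} cross-links $\ell_1,\dots,\ell_4$ of value $\frac12$ and \emph{two} in-links, namely $\ell_5$ of value $\frac12$ and $\ell_6$ of value $1$; so $(x,\alpha y)$ scales \emph{both} $\ell_5$ and $\ell_6$, and the crux of the theorem is the value-$\frac12$ in-link $\ell_5$, not the value-$1$ link. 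There is also no ``odd pentagon'': the five value-$\frac12$ links form a triangle $\{\ell_3,\ell_4,\ell_5\}$ on three leaves together with a two-edge path $\{\ell_1,\ell_2\}$ attached to it, so an edge-cover-of-an-odd-$5$-cycle lemma has nothing to attach to. A sanity check that your reading cannot be right: if only the value-$1$ link were scaled, the point would lie outside $\tap(T,L)$ for \emph{every} $\alpha$ (each feasible solution uses at least three of the five value-$\frac12$ links, whose capacities sum to $\frac52$), so the threshold $2$ would be meaningless; in fact at $\alpha=2$ the point dominates $\frac12\chi^{\{\ell_1,\ell_3,\ell_5,\ell_6\}}+\frac12\chi^{\{\ell_2,\ell_4,\ell_5,\ell_6\}}$, which is possible precisely because $\ell_5$ is among the scaled coordinates.

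The calibration fails quantitatively as well, even after relabeling: with unit weight on the five value-$\frac12$ links and weight $\beta>0$ on $\ell_6$, the point has value $2+\frac{\alpha}{2}+\beta\alpha$ while the integral minimum is $3+\beta$ (every feasible solution must contain $\ell_6$, since it alone covers one leaf edge), so this certificate only excludes $\alpha<\frac{2+2\beta}{1+2\beta}<2$ and misses the top of the range; your plan of ``a small positive $\beta$'' can never reach the threshold. The repair within your framework is to put weight $0$ on $\ell_6$ and weight $1$ on $\ell_1,\dots,\ell_5$, so that the $\alpha$-scaled link $\ell_5$ carries unit weight: then $w\cdot(x,\alpha y)=2+\frac{\alpha}{2}<3$ exactly when $\alpha<2$, and a short case check gives $\min_A w(A)=3$ --- edge $e$ forces one of $\ell_1,\ell_2$; the two leaf edges at the endpoints of $\ell_5$ force a link from $\{\ell_4,\ell_5\}$ and one from $\{\ell_3,\ell_5\}$, so a two-link solution would have to be $\{\ell_1,\ell_5\}$ or $\{\ell_2,\ell_5\}$, which leave uncovered the tree edge from $r$ to its internal child, respectively to its rightmost leaf child. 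With that lemma in place your averaging argument goes through and is a legitimate, arguably cleaner, alternative to the paper's proof, which instead reasons directly about a rational convex decomposition $\frac1k\sum_i B_i$: each $B_i$ uses exactly one of $\ell_1,\ell_2$, the six minimal solutions are enumerated, and a counting argument over $\{\ell_3,\ell_4,\ell_5\}$ forces $\ell_5$ into all $k$ solutions even though its value is only $\frac{\alpha}{2}$.
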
\label{thm:theexample}

\begin{proof}
Suppose $u = (x,\alpha y) \geq \frac{1}{k}\sum_{i=1}^k B_i$ where $B_i$ is the incidence vector of a integral TAP solution. Since $\ell_6$ has a value of 1, we can assume without loss of generality that link $\ell_6$ appears in $B_i$ for all $i \in [k]$. All cross-links have value $\frac{1}{2}$, so they appear in at most $\frac{k}{2}$ of the integral TAP solutions. We will show that the in-link $\ell_5$ must be used in all $k$ integral TAPs.

Note that edge $e$ is covered by exactly two cross-links $\ell_1$ and $\ell_2$. Thus, each $B_i$ must include at least one of $\ell_1$ or $\ell_2$ to be feasible. Since each of $\ell_1$ and $\ell_2$ is used in at most $\frac{k}{2}$ integral solutions, we conclude that every $B_i$ includes exactly one of $\ell_1$ and $\ell_2$. 

The minimal feasible TAP solutions which include $\ell_1$ but not $\ell_2$ are: $$\{\ell_1,\ell_6,\ell_3,\ell_4\},\{\ell_1,\ell_6,\ell_4,\ell_5\},\text{ and }\{\ell_1,\ell_6,\ell_3,\ell_5\}.$$

The minimal feasible TAP solutions which include $\ell_2$ but not $\ell_1$ are: $$\{\ell_2,\ell_6,\ell_3,\ell_4\},\{\ell_2,\ell_6,\ell_4,\ell_5\},\text{ and }\{\ell_2,\ell_6,\ell_3,\ell_5\}.$$

Thus, we may assume that each $B_i$ is one of the aforementioned feasible integral TAP solutions. Note that in all six such solutions, two links out of $\{\ell_3,\ell_4,\ell_5\}$ are used. Hence, in total, links from $\{\ell_3,\ell_4,\ell_5\}$ are used $2k$ times over all $B_i$. By assumption, the cross-links $\ell_3$ and $\ell_4$ are used at most $\frac{k}{2}$ times. Hence $\ell_5$ is used at least $k$ times. 

In particular, the in-link $\ell_5$ must be used in all $k$ integral TAP solutions in the convex combination. Since its value was only $\frac{1}{2}$, we see that $(x,\alpha y) \not \in \tap(T,L)$ for any $\alpha < 2$. \end{proof}

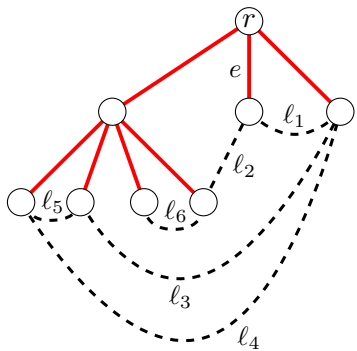
\begin{figure}[h]
		\centering
		\begin{tikzpicture}[scale=0.6]

	    \draw [-] [red, line width=0.5mm,xshift=0 cm] plot [smooth, tension=1] coordinates {(5,6) (2,4)};
		\draw [-] [red, line width=0.5mm,xshift=0 cm] plot [smooth, tension=1] coordinates {(5,6) (5,4)};
		\draw [-] [red, line width=0.5mm,xshift=0 cm] plot [smooth, tension=1] coordinates {(5,6) (7,4)};
		\draw [-] [red, line width=0.5mm,xshift=0 cm] plot [smooth, tension=1] coordinates {(0,2) (2,4)};
		\draw [-] [red, line width=0.5mm,xshift=0 cm] plot [smooth, tension=1] coordinates {(1.3,2) (2,4)};
		\draw [-] [red, line width=0.5mm,xshift=0 cm] plot [smooth, tension=1] coordinates {(2.7,2) (2,4)};
		\draw [-] [red, line width=0.5mm,xshift=0 cm] plot [smooth, tension=1] coordinates {(4,2) (2,4)};

		\draw [dashed] [black, line width=0.4mm,xshift=0 cm] plot [smooth, tension=1] coordinates {(5,4) (6,3.5) (7,4)};
		\draw [dashed] [black, line width=0.4mm,xshift=0 cm] plot [smooth, tension=2] coordinates {(0.2,2) (0.65,1.6) (1.3,2)};
		\draw [dashed] [black, line width=0.4mm,xshift=0 cm] plot [smooth, tension=2] coordinates {(2.7,2) (3.35,1.4) (4,2)};
		\draw [dashed] [black, line width=0.4mm,xshift=0 cm] plot [smooth, tension=2] coordinates {(4,2) (5,4)};
		\draw [dashed] [black, line width=0.4mm,xshift=0 cm] plot [smooth, tension=1] coordinates {(0,2) (4,-1) (7,4)};
		\draw [dashed] [black, line width=0.4mm,xshift=0 cm] plot [smooth, tension=1] coordinates {(1.3,2) (4,0.4) (7,4)};

		\draw[black,fill=white] (5,6) ellipse (0.3 cm  and 0.3 cm);	
		\draw[black,fill=white] (2,4) ellipse (0.3 cm  and 0.3 cm);			
		\draw[black,fill=white] (5,4) ellipse (0.3 cm  and 0.3 cm);
		\draw[black,fill=white] (7,4) ellipse (0.3 cm  and 0.3 cm);	
		\draw[black,fill=white] (0,2) ellipse (0.3 cm  and 0.3 cm);	
		\draw[black,fill=white] (1.3,2) ellipse (0.3 cm  and 0.3 cm);	
		\draw[black,fill=white] (2.7,2) ellipse (0.3 cm  and 0.3 cm);	
		\draw[black,fill=white] (4,2) ellipse (0.3 cm  and 0.3 cm);

		\node (r) at (5,6) {{$r$}};
		
		\node (e) at (4.7,4.9) {{\small $e$}};

		\node (l1) at (6,3.9) {{\small $\ell_1$}};
		\node (l2) at (4.9,2.8) {{\small $\ell_2$}};
		\node (l3) at (3.5,-0.1) {{\small $\ell_3$}};
		\node (l4) at (5,-1) {{\small $\ell_4$}};
		\node (l5) at (0.7,2) {{\small $\ell_5$}};
		\node (l6) at (3.4,1.8) {{\small $\ell_6$}};
		


	    
	    \end{tikzpicture}
	    \caption{In the above 2-level TAP instance $(T,L)$, the point $(x,y) = (\frac{1}{2},\frac{1}{2},\frac{1}{2},\frac{1}{2},\frac{1}{2},1)$ is an extreme point for $\oddLP(T,L)$. However, the point $(x, \alpha y)$ is not a convex combination of integral TAP solutions for any $\alpha < 2$.
}
        \label{fig:blowupinlink}
\end{figure}

\subsection{Lower Bound on the odd-LP}

Here we demonstrate a lower bound on the integrality gap of the odd-LP, even for 2-level TAP. In the TAP instance in Figure~\ref{fig:blowupinlink}, let all links have cost 1 except link $\ell_6$ which has cost 0. The optimal integral solution has cost 3. The optimal fractional solution to the odd-LP has cost at most $\frac{5}{2}$, since $(x,y) = (\frac{1}{2},\frac{1}{2},\frac{1}{2},\frac{1}{2},\frac{1}{2},1)$ is feasible. Hence the integrality gap is at least $\frac{6}{5}$. 

\section*{Acknowledgements}
Sandia National Laboratories is a multimission laboratory managed and operated by National Technology and Engineering Solutions of Sandia, LLC., a wholly owned subsidiary of Honeywell International, Inc., for the U.S. Department of Energy’s National Nuclear Security Administration under contract DE-NA-0003525. OP was supported by the U.S. Department of Energy, Office of Science, Office of Advanced Scientific Computing Research, Accelerated Research in Quantum Computing and Quantum Algorithms Teams programs.
This material is based upon work supported by the U. S. Office of Naval Research under award number N00014-21-1-2243 to RR.

\bibliographystyle{plain}
\bibliography{refs}

\end{document}